\theoremstyle{plain}
\newtheorem{theorem}{Theorem}[section]
\newtheorem{corollary}[theorem]{Corollary}
\newtheorem{lemma}[theorem]{Lemma}
\newtheorem{proposition}[theorem]{Proposition}
\newtheorem{observation}[theorem]{Observation}
\newtheorem{remark}[theorem]{Remark}
\newtheorem{fact}[theorem]{Fact}
\theoremstyle{definition}
\newtheorem{definition}[theorem]{Definition}
\DeclarePairedDelimiter\parentheses{(}{)}
\DeclarePairedDelimiter\braces{\{}{\}}
\DeclarePairedDelimiter\brackets{[}{]}
\DeclarePairedDelimiter\absolute{|}{|}
\DeclarePairedDelimiter\brackhalf{[}{)}
\DeclarePairedDelimiter\parhalf{(}{]}
\begin{document}
\newcommand{\opt}{{\mathrm{OPT}}}
\newcommand{\reg}{{\mathrm{Reg}}}
\newcommand{\apr}{{\mathrm{Apr}}}
\newcommand{\loss}{{\mathrm{Loss}}}
\newcommand{\argmax}{{\mathrm{argmax}}}
\newcommand{\supp}{{\mathrm{supp}}}
\newcommand{\at}{{\mathrm{at}}}
\newcommand{\vol}{{\mathrm{Vol}}}
\newcommand{\val}{{\mathrm{Val}}}
\newcommand{\arb}{{\mathrm{AR}}}
\newcommand{\mon}{{\mathrm{MON}}}
\newcommand{\mul}{{\mathrm{MON-MD}}}
\newcommand{\norm}[1]{\left\lVert#1\right\rVert}
\title{Regret-Minimizing Bayesian Persuasion}
\date{May 28, 2021}
\author{Yakov Babichenko\thanks{Technion--Israel Institute of Technology | \emph{E-mail}: \href{mailto:yakovbab@technion.ac.il}{yakovbab@technion.ac.il}. Yakov is supported by the Binational Science Foundation BSF grant no. 2018397 and by the German-Israeli Foundation for Scientific Research and Development GIF grant no. I-2526-407.6/2019.} 
\and 
Inbal Talgam-Cohen\thanks{Technion--Israel Institute of Technology | \emph{E-mail}: \href{mailto:italgam@cs.technion.ac.il}{italgam@cs.technion.ac.il}. Inbal is a Taub Fellow supported by the Taub Family Foundation.
This research was supported by the Israel Science Foundation grant no.~336/18.
} 
\and 
Haifeng Xu \thanks{University of Virginia | \emph{E-mail}: \href{mailto:hx4ad@virginia.edu}{hx4ad@virginia.edu}. Haifeng is supported by a Google Faculty Research Award.} \and Konstantin Zabarnyi\thanks{Technion--Israel Institute of Technology | \emph{E-mail}: \href{mailto:konstzab@gmail.com}{konstzab@gmail.com}.}}
\maketitle
\begin{abstract}
We study a Bayesian persuasion setting with binary actions (\emph{adopt} and \emph{reject}) for Receiver. We examine the following question -- how well can Sender perform, in terms of persuading Receiver to adopt, when ignorant of Receiver's utility? We take a robust (adversarial) approach to study this problem; that is, our goal is to design signaling schemes for Sender that perform well for \emph{all} possible Receiver's utilities. We measure performance of signaling schemes via the notion of (additive) \emph{regret}: the difference between Sender's hypothetically optimal utility had she known Receiver's utility function and her actual utility induced by the given scheme.

\vspace{\baselineskip}

On the negative side, we show that if Sender has no knowledge at all about Receiver's utility, then Sender has no signaling scheme that performs robustly well. On the positive side, we show that if Sender only knows Receiver's ordinal preferences of the states of nature -- i.e., Receiver's utility upon adoption is monotonic as a function of the state -- then Sender can guarantee a surprisingly low regret even when the number of states tends to infinity. In fact, we exactly pin down the minimum regret value that Sender can guarantee in this case, which turns out to be at most $1/e$. We further show that such positive results are not possible under the alternative performance measure of a multiplicative approximation ratio by proving that no constant ratio can be guaranteed even for monotonic Receiver's utility; this may serve to demonstrate the merits of regret as a robust performance measure that is not too pessimistic. Finally, we analyze an intermediate setting in between the no-knowledge and the ordinal-knowledge settings.
\end{abstract}
\section{Introduction}
\label{sec:intro}
\paragraph{{\bf Bayesian persuasion and applications.}} Since the seminal paper of~\citet{Kamenica2011}, the model of \emph{Bayesian persuasion} -- a.k.a.~\emph{information design} -- has been extensively studied in economics, computer science and operational research (for recent surveys see~\cite{Dughmi17,Kamenica2019,BergemannM19,Candogan20}). The model considers an informed \emph{Sender} who knows the \emph{state of nature}, and an uniformed \emph{Receiver} who does not know the state, but takes an action that affects both Receiver's and Sender's utilities. Sender has the ability to commit, before learning the state, to an information revelation policy called a \emph{signaling scheme}. The main question of interest is: what is the optimal Sender's utility, and what scheme should Sender choose to achieve this utility? The Bayesian persuasion model has quite a few applications in different domains, including interactions between: 
\begin{itemize}
    \item a prosecutor (Sender) and a judge (Receiver)~\cite{Kamenica2011}; the unknown state of nature is whether the defendant is guilty or not; the prosecutor's goal is to persuade the judge to convict;
    \item a seller of a product (Sender) and a potential buyer (Receiver) \cite{Kamenica2011,AB,Ozan19}; the unknown state is the product quality; the seller's goal is to persuade the buyer to purchase;
    \item a principle (Sender) delegating a project to an agent (Receiver) \cite[][]{DM}; the unknown state is the project quality; the principle's goal is to persuade the agent to maximize the effort invested in the project.
\end{itemize}
Further application domains of Bayesian persuasion include security~\cite{Xu15,Rabinovich15}, routing~\cite{BhaskarCKS16}, recommendation systems~\cite{Mansour2016bayesian}, auctions~\cite{emek2014signaling,bro2012send}, voting~\cite{Alonso14,cheng2015mixture}, queuing~\cite{Lingenbrink2019}, etc.
\paragraph{{\bf Knowledge of Receiver's utility.}} One of the fundamental assumptions underlying the Bayesian persuasion model is that Sender perfectly knows Receiver's utility, and she can use this detailed knowledge in her persuasion efforts. However, this assumption may be too demanding in some circumstances. In particular, in all examples above, it is reasonable to assume that Sender is uncertain about Receiver's precise utility. In the judge-prosecutor interaction, the prosecutor may be uncertain of the judge's exact utilities in the four cases of conviction vs.~acquittal, and guilty vs.~innocent defendant; thus, when designing her investigation (signaling scheme) that will convey to the judge a certain posterior probability of guilt/innocence, the prosecutor might not know the threshold probability above which the judge would convict. Similarly, in the seller-buyer interaction, the seller may be unaware of the buyer's utility as a function of the product's quality, and a principle may be unaware of her agent's utility (in particular -- because she might not know how costly the agent's effort is). Arguably, in almost all Bayesian persuasion applications, the scenario in which Sender is not fully aware of Receiver's utility may naturally arise.
\paragraph{{\bf Bayesian knowledge of Receiver's utility.}} These examples motivate the study of Bayesian persuasion in settings in which Sender has to decide how to reveal her private information despite not having complete knowledge of Receiver's utility. One approach for addressing this issue is \emph{Bayesian} -- Sender does not know Receiver's utility, but she has a prior distribution on it. However, relying on precise Bayesian knowledge raises the same brittleness issues as relying on knowledge of the utility itself, only ``one level up''. In the oft-cited words of Nobel Laureate Robert Wilson, ``I foresee
the progress of game theory as depending on successive reduction in the base of common knowledge required to conduct useful analyses of practical problems'' \cite{Wilson87} (see also \cite{Hayek45,Scarf58}).
As another consideration against the Bayesian route, in our setting it turns out that Bayesian modeling of uncertainty regarding the utility essentially reduces back to the standard persuasion model.\footnote{For each posterior distribution of Receiver over the states of nature, the \emph{distribution of actions} that Receiver would take is determined by Sender's prior on Receiver's utilities. This specifies the expected Sender's utility for each Receiver's posterior. Therefore, one can analyze the interaction using the standard Bayesian persuasion setting, and so this approach is not expected to shed new light on robust methods of persuasion.}
\paragraph{{\bf The robust approach.}}
An alternative way to address Sender's uncertainty is the \emph{robust approach} -- a.k.a.~the \emph{adversarial approach} or the \emph{prior-free approach}. The goal is to design signaling schemes that perform robustly well, that is -- perform well for \emph{all} Receiver's utilities, even when an adversary is allowed to choose the \emph{worst case} utility for Receiver. The adversarial lens has the potential to provide insights on issues such as: \emph{(a)~How harmful can it be for Sender to be unaware of Receiver's utility? (b)~What information about Receiver's utility is sufficient to enable ``reasonably good" persuasion? (c)~How should Sender approach persuading Receiver if Receiver's utility is unknown?} We shall return to these questions in Subsection~\ref{sub:res}.

There are several common robust approaches, which differ in their measure of performance and the benchmark they measure against. All three approaches have proved useful in appropriate contexts:
\begin{enumerate}
    \item {\bf Regret minimization.} The \emph{regret minimization approach} compares the performance of a signaling scheme when Sender \emph{does not know} the Receiver's utility function to the performance of an optimal scheme \emph{with knowledge} of Receiver's utility; the comparison is by considering the \emph{difference} between the optimal utilities in the two cases, known as the (additive) \emph{regret}. 
    The regret minimization approach is attributed to the classic work of \citet{Savage51} on decision theory.
    It is also the leading paradigm in online machine learning \cite{ShalevSBD14}. In economic contexts, two recent examples adopting this approach are \cite{ABS18} (in the context of information aggregation) and \cite{GuoS19} (in the context of monopoly regulation).
    \item {\bf Adversarial approximation.} The \emph{adversarial approximation approach} is similar to the previous one, but the comparison is by considering the corresponding \emph{ratio}. 
    Using the ratio to measure performance is the leading paradigm in approximation algorithms \cite{Vazirani03},
    partially due to it being scale-free.%
    \footnote{The approximation ratio does not depend on the units of measure and requires no normalization.}
    In economic contexts, the adversarial approximation approach underlies the well-developed line of research on prior-independent optimal auctions ~\cite{DhangwatnotaiRY15,Talgam20} and their sample complexity~\cite{ColeRoughgarden14} (see \cite{RoughgardenT19} for a survey).
    Another example is the work of~\cite{HurwiczS78} on sharecropping contracts.
    \item {\bf Minimax.} The \emph{minimax approach}%
    \footnote{Note that the name of this approach is slightly misleading, as all three robust approaches have a minimax flavor, and the difference is in what is being optimized within the minimax expression.}
    measures the \emph{absolute performance} of a signaling scheme with \emph{no knowledge} of Receiver's utility function; this approach has no benchmark.
    The minimax approach is attributed to the scholarship of \citet{Wald50}.
    There are many recent applications of the minimax approach in economics, 
    including to auctions~\cite{BandiB14,Carroll17,GravinL18,BeiGLT19} and contracts~\cite{Carroll15,DuttingRT19} (see the comprehensive survey of~\citet{Carroll19}).
\end{enumerate}
\paragraph{{\bf Our focus on regret.}}
In this paper, we mainly focus on the first approach of regret minimization. This approach is known to be less pessimistic than the minimax approach (see, e.g., \cite{Wiki21} for a concrete example). 
Indeed, \citet{Savage72} highlights the ``extreme pessimism'' of the minimax approach as a disadvantage relative to regret minimization (see Chapters 9.8 and 13.2).
In our particular setting, minimax seems to suggest that without further assumptions there is not much Sender can do (see more discussions in Section~\ref{sec:future}). As for the adversarial approximation approach, we study it in Section~\ref{sec:other-robustnesses} and show that it too is overly pessimistic. In light of these negative results, one take-away from our analysis is the merit of regret minimization as a tool for deriving economic insights, especially in settings in which the two other (arguably more popular) robust approaches fail to produce such insights. In this sense, our work can be seen as close to that of \citet{DworczakP20}, who also diverge from the standard minimax approach to better capture the best policy for Sender. Our motivation
is very similar to that of~\citet{castiglioni2020online}, whose interesting work studies Bayesian persuasion in an online learning framework with the goal of relaxing the assumption that Sender knows Receiver’s utility. We view our approach to the problem as complementary to theirs. In their model, Sender repeatedly faces Receiver with a non-binary action, whose type is chosen by Adversary at each round from a finite set of possibilities. Their regret notion is with respect to a best-in-hindsight single signaling scheme. In contrast, we minimize the regret over a single persuasion instance with respect to the best scheme tailored to that instance. We consider a binary-action Receiver with a quite general class of (continuum-many) possible utilities.
\subsection{Our Results}
\label{sub:res}
\paragraph{{\bf Settings of interest.}}
We consider a setting with $n$ states of nature, an arbitrary prior distribution over them and a binary-action Receiver whose possible actions are \emph{adoption} and \emph{rejection}. Sender aims to persuade Receiver to adopt. This simple persuasion setting is aligned with the motivating examples mentioned above (prosecutor-judge, seller-buyer and principle-agent). Note that a binary-action Receiver is a fundamental case in Bayesian persuasion studied in many works -- two recent examples are~\cite{KMZL,guo2019interval}.

We study the following three variants:
\begin{itemize}
    \item \emph{Arbitrary utilities}, with a completely ignorant Sender, who has no information whatsoever on Receiver's utility.
    \item \emph{Monotonic utilities}, with Sender knowing that Receiver's utility upon adoption is monotonic as a function of the state of nature. This assumption arises naturally when the state of nature reflects possible qualities of a certain product (higher quality yields higher Receiver's utility upon adoption).
    \item \emph{Multidimensional monotonic utilities}, with the state representing the qualities of several attributes (dimensions) of a product, each with a finite set of possible qualities. We make the natural assumption that Receiver's utility upon adoption is monotonic in each dimension. We focus on a constant number of attributes, as is natural when persuading to buy a certain product.
\end{itemize}
In each case, the problem is specified by the number of states $n$ and by the prior over these states. For some of these specifications our results are able to pin down the regret explicitly, and for others we provide asymptotic results for $n\to \infty$.
\paragraph{{\bf Results for arbitrary vs.~monotonic utilities.}}
Our adversarial analysis shows that Sender cannot hope for a nontrivial bound on her regret for arbitrary utilities with a large number of states (see Theorem~\ref{thm:arbitrary-asy}). In contrast, even if the number of states tends to infinity, the regret remains quite low (at most $\frac{1}{e}$) in the monotonic utility case (Theorem~\ref{thm:mon}).
These two results together provide answers to questions $(a)$ and $(b)$ posed when introducing the robust approach -- on the one hand, it might be very harmful for Sender to be unaware of Receiver's utility; on the other hand, knowing Receiver's ordinal preferences over the states may suffice for Sender to persuade Receiver ``reasonably well". Our results highlight monotonicity as the distinguishing property among settings in which Sender can persuade and those in which she should seek additional information on Receiver before approaching him.

Regarding question $(c)$ -- how should Sender persuade Receiver without knowing her utility -- our positive results are all constructive, i.e., in their proofs we provide a signaling scheme that achieves the claimed regret.

Interestingly, in many cases we discover that the regret-minimizing policies use \emph{infinitely many} signals. This is in sharp contrast to the standard persuasion model, in which binary-signal policies are sufficient. 

Another insight from our results for monotonic utilities is that the idea of ``pooling together'' the highest-utility states is useful not only in standard persuasion, in which such threshold policies are known to be optimal~\cite{RenaultSV17}, but also for regret minimization. It makes our work fall within the ``classic'' theme of the robust mechanism design literature, by which well-known mechanisms have robustness properties -- including linear contracts~\cite{Carroll15} and simultaneous ascending auctions~\cite{milgrom2000putting}. In the standard setting, the amount of probability mass on highest-utility states that Sender pools together in an optimal scheme is determined by Receiver's utility. In our setting, since Receiver's utility is unknown, this amount should be carefully randomized to minimize the regret. In fact, this amount is a non-negative real-valued random variable with a continuum-sized support; hence, it translates to a signaling scheme with a continuum of signals.
\paragraph{{\bf Further results.}}
In Subsection~\ref{sub:md}, we study multidimensional states of nature assuming that Receiver's utility is monotonic in every dimension. Each dimension represents an attribute of a certain product; the value of the corresponding coordinate of a state represents the quality level of that attribute. We focus on a constant number of attributes. We provide a positive result for product priors by upper-bounding the regret in terms of the number of attributes. However, as a corollary from the arbitrary utility analysis, the regret for general priors might approach $1$ as the number of possible quality levels grows to infinity, even for $2$ attributes.

In Section~\ref{sec:other-robustnesses}, we depart from regret minimization and consider the alternative adversarial approximation approach. We first establish that all our negative results (from the arbitrary and multidimensional monotonic utility cases) translate to this approach as well. Furthermore, we prove that even with monotonic utilities, Sender can robustly achieve only a logarithmic factor of the optimal utility upon knowing Receiver's utility.
\subsection{Techniques and Organization}
Robust persuasion can be seen as a zero-sum game between Sender and an adversary (see Section~\ref{sec:setting}). Since both players have a rich set of strategies (signaling schemes vs.~adversarial utility functions), it is unclear how to analyze this game to get a handle on its value. Our main positive result (Theorem~\ref{thm:mon}) uses the ordinal knowledge of states to reduce the complex game to a much simpler one in which each player simply chooses a threshold real number from which Receiver adopts. Without the ordinal knowledge, getting a positive result requires complex analysis; in Proposition~\ref{pro:arbitrary-ternary} we achieve this via geometric arguments and analytical geometry. As some of our proofs are technically involved, we defer many of them to the appendices and use the body of the paper for presenting the result statements, proof intuitions, and main takeaways. Section~\ref{sec:future} contains several interesting directions for future research.
\subsection{Additional Related Work}
\label{sub:related} 
Recently, there is a growing interest in understanding the Bayesian persuasion model from an adversarial (robust) perspective. This recent literature focuses on robustness with respect to different ingredients of the persuasion model. 
\cite{HuW20,Kosterina20,DworczakP20} study robustness of persuasion with respect to private information that Receiver might have.
\cite{IP20,LSZ,MPT,MOT,Ziegler20} consider persuasion with multiple receivers and study robustness with respect to the strategic behavior of the receivers after observing their signals. Most of the existing literature on robust Bayesian persuasion adopts the minimax approach, while we focus on regret minimization.
\section{Our Setting}
\label{sec:setting}
We start by describing the classic Bayesian persuasion setting with a single Sender and a single Receiver. We impose two restrictions on the standard model: (1)~Receiver has binary actions, which captures choosing whether to adopt a certain offer or not; (2)~Sender's utility is state-independent, which captures caring only about whether Receiver adopts the offer, regardless of the circumstances.
\paragraph{{\bf Prior, posterior and signaling scheme.}} Let $\Omega=\brackets*{n}=\braces*{1,...,n}$ be the \emph{state of nature space}, and let $\omega\in\Omega$ be the true \emph{state of nature}. Denote by $\Delta\parentheses*{\Omega}$ the set of all the probability distributions over $\Omega$. For every $q\in\Delta\parentheses*{\Omega}$, let $q_i$ be the probability the distribution $q$ assigns to the state of nature $i$. Define the \emph{support} of $q$ to be $\supp\parentheses*{q}:=\braces*{i\in\Omega: q_i>0}$. Let $\mu \in \Delta\parentheses*{\Omega}$ be the \emph{prior distribution} on $\Omega$ -- a.k.a.~the \emph{prior}. Assume that $\mu$ is publicly known. Furthermore, assume w.l.o.g.~that the probability of every state is strictly positive -- that is, $\mu_i>0$ for every $i\in\Omega$.\footnote{Otherwise, one can eliminate $i$ from $\Omega$.} Therefore, $\mu$ is an interior point of $\Delta\parentheses*{\Omega}$.

The \emph{signaling scheme} is a stochastic mapping $\pi:\Omega \to S$, where $S$ is a (finite or infinite) set of signals. Sender commits to signaling scheme $\pi$ and then observes the true state $\omega$; Receiver does not observe $\omega$. Upon learning $\omega$, Sender transmits to Receiver, according to $\pi\parentheses*{\omega}$, a \emph{signal realization} $s\in S$. After receiving $s$, Receiver updates his belief regarding the state of nature distribution to a \emph{posterior distribution} on $\Omega$ -- a.k.a.~the \emph{posterior} -- denoted by $p\parentheses*{s}\in\Delta\parentheses*{\Omega}$, where $p_i\parentheses*{s}=\Pr\brackets*{\omega=i\mid s}=\frac{\Pr_{s'\sim\pi\parentheses*{i}}\brackets*{s'=s}\mu_i}{\Pr\brackets*{s}}$. Only the posterior distribution is important to the persuasion instance outcome, rather than the signal realization itself.

We slightly abuse the notation and use $\pi$ also to denote the distribution over the elements of $S$ induced by the signaling scheme $\pi$ considering the prior $\mu$. That is, for every $s_0\in S$:
\begin{align*}
    &\Pr_{s\sim\pi}\brackets*{s=s_0}=\sum_{i\in\Omega}\mu_i\cdot\Pr_{s'\sim\pi\parentheses*{i}}\brackets*{s'=s_0}.
\end{align*}
\begin{remark}
\label{rem:plausibility}
It is well-known (see, e.g.,~\cite{Blackwell53,AM}) that a distribution over Receiver's posteriors $p\in \Delta\parentheses*{\Omega}$ is implementable by some signaling scheme $\pi$ if and only if $\mathbb{E}\brackets*{p}=\mu$. We refer to this condition as \emph{Bayes-plausibility}.
\end{remark}
\paragraph{{\bf Utilities and adoption.}} In our setting, Receiver has a binary \emph{action space}; that is, Receiver's \emph{action} $a$ is selected from $\braces*{0,1}$. Call action~$1$ \emph{adoption} and action~$0$ \emph{rejection}. Sender's utility is a function $u_s:\braces*{0,1}\to\brackets*{0,1}$ of Receiver's action. We assume $u_s\parentheses*{a}:=a$, i.e., Sender wants Receiver to take action $1$ (to adopt). Thus, the expected Sender's utility equals the probability that Receiver adopts.

Receiver's utility is a function $u_r:\Omega \times \braces*{0,1} \to \mathbb{R}$ of the state of nature and Receiver's action. W.l.o.g., we normalize Receiver's utility $u_r\parentheses*{i,0}$ to zero for every $i\in \Omega$. That is, \emph{Receiver's utility for choosing to reject is always $0$}, regardless of the state of nature.

Given the posterior $p\parentheses*{s}$, Receiver adopts if and only if:\footnote{We assume that ties are broken in Sender's favour, as is standard in the Bayesian persuasion literature.}
\begin{equation}
\label{eq:adoption}
\mathbb{E}_{\omega'\sim p\parentheses*{s}}\brackets*{u_r\parentheses*{\omega',1}}\geq 0.
\end{equation}
\begin{definition}
The \emph{adoption region} $A=A\parentheses*{u_r}\subseteq\Delta\parentheses*{\Omega}$ is the set of posteriors that lead to Receiver's adoption: $A:=\braces*{p\in \Delta(\Omega):\mathbb{E}_{\omega'\sim p}\brackets*{u_r\parentheses*{\omega',1}}\geq 0}$.
\end{definition}
For a signaling scheme $\pi$ and Receiver's utility function $u_r$, denote Sender's expected utility over $\pi$ by:
\begin{equation*}
u\parentheses*{\pi,u_r} :=\Pr_{s\sim\pi}\brackets*{p\parentheses*{s}\in A\parentheses*{u_r}}.
\end{equation*}
\paragraph{\bf Objective.}
Given a Bayesian persuasion setting with (known) $u_r$ as Receiver's utility, the standard goal of the designer (Sender) is to compute a signaling scheme~$\pi^*=\pi^*\parentheses*{u_r}$ that maximizes Sender's expected utility: $\pi^*\in\arg\max_{\pi}\braces*{u\parentheses*{\pi,u_r}}$. In our binary-action setting, computing the optimum $\pi^*$ is a well-understood problem. One general approach to tackle the problem is by first expressing $u_s$ as a function of the posterior, then taking the concavification and evaluating it at the point representing the prior \citep[][]{Kamenica2011}.\footnote{Note that $u_s$ equals the indicator function $\textbf{1}_{p\parentheses*{s}\in A}$; since $A$ is a half-space, the concavification of $u_s$ can be computed and characterized.} 
\subsection{Threshold Signaling Schemes}
We now introduce a class of signaling schemes that can be used to optimally solve the setting with a binary-action Receiver in the standard model with Receiver's utility function known to Sender. Interestingly, this class plays a key part in our results for the adversarial model with unknown utility. We begin by describing the \emph{knapsack method} for solving a persuasion problem; then we formulate the class of signaling schemes that arise from this approach.
\paragraph{\bf The knapsack method.}
An alternative to the general concavification method is a greedy approach that reduces our problem to a fractional \emph{knapsack} instance. States are treated as the knapsack items, and their prior probabilities according to $\mu$ are the weights. Recall that every state induces utility $u_r\parentheses*{i,1}$ to Receiver upon adoption. Sender's goal is to add to the knapsack a maximum-weight set of states (possibly -- their fractions), while keeping Receiver's expected utility non-negative (the expectation is over a random state drawn w.p.~proportional to its weight in the knapsack). The knapsack instance can be solved greedily by sorting the states $i\in \Omega$ according to $u_r\parentheses*{i,1}$ in a non-increasing order, and then continuously adding masses of states to the knapsack as long as the expected Receiver's utility stays non-negative. Such an approach has been adopted by~\cite{RenaultSV17}.

An equivalent way to present the knapsack approach is as follows. Let $i_1,...,i_n$ be an ordering of the states s.t.~$u_r\parentheses*{i_1,1}\leq u_r\parentheses*{i_2,1}\leq...\leq u_r\parentheses*{i_n,1}$. One can interpret the state space equipped with the prior distribution as drawing uniformly a real number in $\brackets*{0,1}$ -- called the \emph{real-valued state} -- s.t.~for every $m\in \brackets*{n}$, all the realizations in the segment $\parhalf*{\sum_{l<m}\mu_{i_l},\sum_{l\leq m}\mu_{i_l}}$ correspond to the state $i_m$.\footnote{To demonstrate this, consider $n=4$ states $i_1,i_2,i_3,i_4$ with prior $\mu\parentheses*{i_1,i_2,i_3,i_4}=\parentheses*{0.2,0.3,0.1,0.4}$. Then, e.g., if the real-valued state is $0.55$, the corresponding state from $\Omega$ is $i_3$.} Now we define the notion of a \emph{threshold signaling scheme}.
\begin{definition}
\label{def:threshold}
For every $t\in \brackets*{0,1}$, the \emph{$t$-threshold signaling scheme} is a binary signaling scheme that reveals whether the real-valued state is below $t$ or not.

Equivalently, for $t\in \parhalf*{0,1}$, in the state space $\Omega$, let $j\in \brackets*{n}$ be s.t.~$t\in \parhalf*{\sum_{l<j}\mu_{i_l},\sum_{l\leq j}\mu_{i_l}}$. Then the \emph{$t$-threshold signaling scheme} is a binary-signal scheme with a \emph{high} and a \emph{low} signal, s.t.~the high signal is sent w.p.~$1$ if $\omega > i_j$, w.p.~$\frac{\sum_{l\leq j}\mu_{i_l}-t}{\mu_{i_j}}$ if $\omega=i_j$ and w.p.~$0$ if $\omega<i_j$. For $t=0$, the $0$-threshold signaling scheme is the scheme that reveals no information.
\end{definition}
Given a threshold signaling scheme, we say of the states for which it sends the high (resp., low) signal that they are \emph{pooled together}. The knapsack approach is based on the following fact:
\begin{fact}[See, e.g.,~\cite{RenaultSV17}]
\label{fact:optimal-threshold}
Every persuasion problem with a binary Receiver's action space admits an optimal $x$-threshold signaling scheme for some $x$. Moreover, for every $y\geq x$, Receiver adopts after observing the high signal in the $y$-threshold signaling scheme.
\end{fact}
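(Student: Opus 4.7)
The plan is to reduce the search over all signaling schemes to a two-step argument. First, I would show that without loss of optimality one may restrict to binary-signal schemes (one ``adoption'' signal and one ``rejection'' signal). Second, I would show that among binary schemes with a given adoption probability $q^\ast$, the threshold scheme with $x = 1 - q^\ast$ is feasible, via a rearrangement/greedy knapsack argument.

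For the first step, the key observation is that the adoption region $A = \{p \in \Delta(\Omega) : \mathbb{E}_{\omega' \sim p}[u_r(\omega',1)] \geq 0\}$ is a half-space of $\Delta(\Omega)$ and hence convex. Given any signaling scheme $\pi$, let $q^\ast = \Pr_{s\sim \pi}[p(s)\in A]$. I would collapse all signals whose posteriors lie in $A$ into a single adoption signal whose posterior is the corresponding conditional expectation; by convexity this pooled posterior is still in $A$. Similarly one pools all non-adoption signals. The resulting binary scheme is Bayes-plausible (pooling preserves the mean, cf. Remark~\ref{rem:plausibility}) and gives Sender the same expected utility $q^\ast$.

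For the second step, I would encode a binary scheme by the vector $(m_1,\ldots,m_n)$ where $m_i$ is the joint probability that the state equals $i$ and the adoption signal is sent. The feasibility constraints are $0 \leq m_i \leq \mu_i$ and $\sum_i m_i = q^\ast$, and the adoption condition becomes $\sum_i m_i u_r(i,1) \geq 0$ (after normalizing by $q^\ast$). Now set $x := 1 - q^\ast$ and consider the corresponding vector $(w_1,\ldots,w_n)$ induced by the $x$-threshold scheme's high signal. It satisfies the same feasibility constraints and, by construction, places all of the $q^\ast$ mass on the highest-$u_r$ states (which is precisely the greedy optimum of the fractional knapsack with weights $\mu_i$ and values $u_r(i,1)$). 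Hence $\sum_i w_i u_r(i,1) \geq \sum_i m_i u_r(i,1) \geq 0$, so the high signal of the $x$-threshold scheme already induces adoption, matching the optimum $q^\ast$. The ``moreover'' clause is immediate: for $y \geq x$, the high-signal pool of the $y$-threshold scheme consists of a subset of the real-valued-state interval $[x,1]$, namely the top portion $[y,1]$, which is concentrated on even higher-utility states; its conditional expected utility therefore only increases, so $\mathbb{E}[u_r\mid \text{high}] \geq 0$ persists and Receiver still adopts.

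The main obstacle is not conceptual but bookkeeping: carefully formalizing Step~1 for schemes with an arbitrary (possibly continuous) signal space, and handling ties in the $u_r(i,1)$ ordering when invoking the knapsack-rearrangement inequality in Step~2. Both are standard measure-theoretic/combinatorial manipulations rather than new ideas, so I would expect the write-up to be short once the decomposition above is in place.
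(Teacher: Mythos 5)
The paper states this as a Fact and cites \cite{RenaultSV17} without giving its own proof, so there is no in-paper argument to compare against. Your two-step argument is correct and matches the standard approach used to establish this result: first pool signals by the convexity of the half-space adoption region (preserving Bayes-plausibility and the adoption probability), and then apply a greedy/exchange argument for the resulting fractional knapsack LP to show that, among all allocations $(w_1,\ldots,w_n)$ with $\sum_i w_i = q^\ast$ and $0\le w_i\le\mu_i$, the threshold allocation maximizes $\sum_i w_i\,u_r(i,1)$ and hence inherits $\ge 0$ from the pooled optimum. Your treatment of the ``moreover'' clause is also right: since $u_r$ is non-decreasing along the real-valued state, truncating the high-signal interval from $[x,1]$ to $[y,1]$ with $y\ge x$ only increases the conditional mean, so the adoption inequality is preserved. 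The only point worth tightening in a full write-up is the existence of an optimal scheme (so that Step~2 can be applied to the supremum $q^\ast$); this follows from the observation that the conditional expected utility on the high signal is continuous in the threshold $t$ and the adoption condition is a closed constraint, so the set of adopting thresholds is a closed interval $[x,1)$ (or empty), giving a well-defined minimizer $x$.
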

\subsection{Robust Approach}
\label{sub:adversarial}
We depart from the standard model by assuming from now on that Receiver's utility function $u_r$ is \emph{unknown} to Sender. Besides Sender and Receiver, we introduce a third entity called \emph{Adversary}. Given $\mu$ and $\pi$, Adversary aims to set Receiver's utility $u_r$ in a way that makes the performance of $\pi$ as bad as possible. Thus, Sender aims to design a signaling scheme with a worst-case guarantee -- it should perform well for \emph{all} possible Receiver's utility functions~$u_r$.
\paragraph{{\bf Regret definition.}}
Here we formalize the (additive) regret minimization setting on which we mostly focus in this paper, as mentioned in Section~\ref{sec:intro}. Fix a prior distribution $\mu$. Consider $\pi^*\parentheses*{u_r}$, a signaling scheme that maximizes Sender's expected utility in the standard persuasion setting in which Receiver's utility $u_r$ is known; denote by $u^*\parentheses*{u_r}$ the expected Sender's utility $\pi^*\parentheses*{u_r}$ yields. In our adversarial setting, given any signaling scheme $\pi$ ignorant to $u_r$, Adversary aims to set $u_r$ in a way that maximizes Sender's regret -- that is, maximizes the difference between what Sender could have gotten had she known $u_r$ and what she gets with $\pi$. In other words, Adversary tries to punish Sender for not choosing $\pi^*\parentheses*{u_r}$ as much as possible. The regret of $\pi$ is, therefore, defined as:
\begin{definition}
\label{def:reg_pi}
The \emph{regret of a signaling scheme $\pi$} is
   $\reg\parentheses*{\pi}:=\sup_{u_r} \braces*{u^*\parentheses*{u_r}-u\parentheses*{\pi,u_r}}.$
\end{definition}
Sender aims to set $\pi$ in a way that minimizes the regret; thus, the regret of the setting is defined as follows:
\begin{definition}
\label{def:reg}
The \emph{regret of the setting} is   $\reg:=\inf_{\pi}\reg\parentheses*{\pi}.$ 
\end{definition}
\paragraph{{\bf The scale of regret.}}
Note that we always have $0\leq \reg \leq 1$. Consider the following possibilities.
\begin{enumerate}
    \item $\reg=0$, i.e., Sender can ensure utility $u^*\parentheses*{u_r}$ without knowing $u_r$. Such a result is ``too good to be true'', as indicated by our results.
    \item $0\ll \reg \ll 1$ even when the problem becomes large (the number of states grows to infinity), i.e., the additive loss from not knowing Receiver's utility is strictly less than $1$. Such a result can be viewed as positive. 
    \item $\reg \to 1$ as the problem becomes large; that is, in the worst case, Sender might lose her entire utility from not knowing $u_r$, as her expected utility might approach $0$ while she could have gotten utility approaching $1$ upon knowing $u_r$. Such a result is negative. One can also study the rate of convergence of $\reg$ to $1$ as a measure of how negative the result is.
\end{enumerate}
\paragraph{{\bf Zero-sum game perspective.}} Our setting can be analyzed via the following two-player zero-sum game $G$. The players are Adversary and Sender. Adversary's pure strategies are mixtures over the functions $\braces*{u_r}$, while Sender's pure strategies are the signaling schemes $\braces*{\pi}$.\footnote{Note that a mixture over signaling schemes is a signaling scheme.} We remark that our proofs do \emph{not} rely on the fact that $G$ has a value. Indeed, to bound $\reg$ from below by some bound $v$, we either describe explicitly a mixed strategy of Adversary ensuring $\reg\parentheses*{\pi}\geq v$ for every $\pi$, or show that for every choice of $\pi$ by Sender, Adversary can ensure a regret of $v$. To bound $\reg$ from above by some bound $v$, we explicitly describe a signaling scheme $\pi$ s.t.~$\reg\parentheses*{\pi}\leq v$. However, for completeness, we mention that Sion's Minimax Theorem~\cite{Sion} holds in our setting.\footnote{Sion's Minimax Theorem holds since the objective function is linear and lower semi-continuous as a function of $\pi$ (w.r.t.~the Lévy–Prokhorov metric on $\Delta\parentheses*{\Omega}$), and is linear and continuous as a function of the mixture over Receiver's utility functions.} Therefore, while we define $\reg$ as $\inf_\pi \sup_{u_r}\braces*{u^*\parentheses*{u_r}-u\parentheses*{\pi,u_r}}=\inf_\pi \sup_{\Delta\parentheses*{u_r}} \braces*{u^*\parentheses*{u_r}-u\parentheses*{\pi,u_r}}$, it also equals $\sup_{\Delta\parentheses*{u_r}} \inf_{\pi} \braces*{u^*\parentheses*{u_r}-u\parentheses*{\pi,u_r}}$. That is, if Sender can ensure a regret of at most $v$ after observing Adversary's strategy, then she has a signaling scheme that ensures $v$ even without knowing Adversary's strategy.
\section{Regret Minimization}
\label{sec:results}
We consider three cases as described in Section~\ref{sec:intro}: first, the case in which Sender has no information at all about Receiver's utility (Subsection~\ref{sub:ar}); second, the case in which Sender knows that Receiver's utility upon adoption is monotonic as a function of the state (Subsection~\ref{sub:mon}); finally, a case with a multidimensional state space and a Sender who knows that Receiver's utility is monotonic in each dimension (Subsection~\ref{sub:md}). As mentioned in Subsection~\ref{sub:res}, in the first case we obtain a negative asymptotic result, while in the second case we have a positive result; in the third case, we obtain a positive result for product priors, but a negative asymptotic result similar to the one from the first case holds for general priors.
\subsection{Arbitrary Utilities}
\label{sub:ar}
Denote by $\reg_{\arb}$ the regret for arbitrary utilities. That is, in Definition~\ref{def:reg_pi}, the supremum is taken over all Receiver's utilities. In this section, we start by characterizing the minimum regret for $n=2$ states, as well as for $n=3$ states with a uniform prior $\mu$ -- in both cases, $\reg_{\arb}=\frac{1}{2}$. Unfortunately, such constant regret cannot be expected in general for arbitrary utilities. Indeed, our main result of this section shows that for any $n\geq 2$, the regret satisfies $1-\frac{2}{\sqrt{n}}\leq \reg_{\arb} \leq 1-\frac{1}{4n^2}$. In particular, $\reg_{\arb}\to_{n\to\infty} 1$.
\begin{observation}
\label{obs:arb_adoption}
With arbitrary utilities, the set of all possible choices of $A\parentheses*{u_r}$ by Adversary coincides with the set of all the polytopes obtained by cutting $\Delta\parentheses*{\Omega}$ by a hyperplane.
\end{observation}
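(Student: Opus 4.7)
The plan is to establish the claimed set equality by giving, in each direction, an explicit dictionary between Receiver's utility functions $u_r$ and half-space descriptions of cuts of $\Delta(\Omega)$.

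For the forward inclusion, I would fix an arbitrary $u_r$, set $v_i := u_r(i,1)$, and rewrite the defining inequality $\mathbb{E}_{\omega' \sim p}[u_r(\omega',1)] \geq 0$ of $A(u_r)$ as $\langle v, p \rangle \geq 0$. This is a closed half-space in $\mathbb{R}^n$ whose boundary is the hyperplane $H := \{p \in \mathbb{R}^n : \langle v, p \rangle = 0\}$, so $A(u_r) = \Delta(\Omega) \cap \{p : \langle v, p \rangle \geq 0\}$ is exactly one of the two polytopes into which $H$ partitions $\Delta(\Omega)$ (degenerating to all of $\Delta(\Omega)$ when $H$ misses its relative interior, which is still a hyperplane cut in the convention adopted here).

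For the converse inclusion, I would start from an arbitrary hyperplane $H = \{p \in \mathbb{R}^n : \langle a, p \rangle = b\}$ together with a choice of one of its closed half-spaces, say $H^+ := \{p : \langle a, p \rangle \geq b\}$ (the other side is symmetric); the task is to exhibit some $u_r$ with $A(u_r) = \Delta(\Omega) \cap H^+$. The key step is to absorb the affine shift using the simplex constraint: since every $p \in \Delta(\Omega)$ satisfies $\sum_i p_i = 1$, the inequality $\langle a, p \rangle \geq b$ is equivalent, on $\Delta(\Omega)$, to $\langle a - b\mathbf{1}, p \rangle \geq 0$. Setting $u_r(i,1) := a_i - b$ for each $i$ (with $u_r(i,0) := 0$ as per the normalization already imposed in Section~\ref{sec:setting}) then yields $A(u_r) = \Delta(\Omega) \cap H^+$, as required.

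There is no genuine obstacle: the statement reduces to the fact that the coefficient vector of Receiver's adoption inequality can be chosen arbitrarily in $\mathbb{R}^n$, so adoption regions and hyperplane cuts of $\Delta(\Omega)$ are in bijection up to a choice of side. The only mild subtlety, namely that the adoption boundary $\langle v, p \rangle = 0$ passes through the origin while a generic hyperplane need not, is exactly what the $b\mathbf{1}$ reparametrization above resolves.
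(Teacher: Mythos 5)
Your proof is correct and follows the same underlying idea as the paper's one-sentence justification, which only sketches the forward direction (that each $u_r$ induces a hyperplane cut via Equation~\eqref{eq:adoption}). You go further and supply the converse direction that the paper leaves implicit, and the key step there is exactly right: using the simplex constraint $\sum_i p_i = 1$ to absorb the affine offset $b$ into the coefficient vector via $a \mapsto a - b\mathbf{1}$, so that an arbitrary affine half-space restricted to $\Delta(\Omega)$ is realized by a linear threshold through the origin, i.e.\ by a valid choice of $u_r(\cdot,1)$. This is a more complete write-up of the same argument rather than a different route.
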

This observation follows from Equation~\ref{eq:adoption}, as for a fixed $u_r$, the condition $\mathbb{E}_{\omega'\sim p\parentheses*{s}} \brackets*{u_r\parentheses*{\omega',1}}=0$ specifies a hyperplane in $\Delta\parentheses*{\Omega}$ (see Figure~\ref{fig:adoption} for illustration).
\begin{figure}[H]
    \centering
    \caption{An illustration of the adoption region for arbitrary utilities and $n=3$.}
    \label{fig:adoption}
    \begin{tikzpicture}[scale=2]
        \draw (0,0)--(1,0)--(0.5,0.86)--(0,0);
        \draw (0.475,-0.215)--(0.1,0.86);
        \filldraw[gray] (0,0)--(0.4,0)--(0.25,0.43);
        \node at (0.2,0.15) {$A$};
        \node[left] at (0.2,0.6) {$\mathbb{E}_{\omega' \sim p\parentheses*{s}}\brackets*{u_r\parentheses*{\omega',1}}=0$};
        \node[below] at (0,0) {$\omega=1$};
        \node[below] at (1,0) {$\omega=2$};
        \node[above] at (0.5,0.86) {$\omega=3$};
    \end{tikzpicture}
\end{figure}
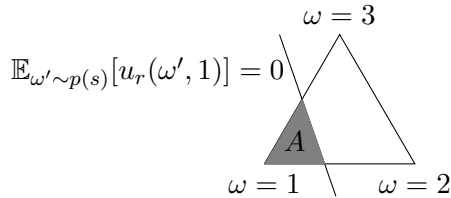
For a binary state space we have the following result.
\begin{proposition}
\label{pro:arbitrary-binary}
For $n=2$ and any prior $\mu$: $\reg_{\arb}=\frac{1}{2}$.
\end{proposition}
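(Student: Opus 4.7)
The plan is to split the proof into a matching lower bound and upper bound, both exploiting the fact that for $n=2$ every posterior lives in $[0,1]$ (identified with the probability of state $1$), so Bayes-plausibility reduces to $\mathbb{E}_\pi[p]=\mu_1$ and by Observation~\ref{obs:arb_adoption} every nontrivial adoption region is either $[0,p^*]$ or $[p^*,1]$. For such adoption regions the optimal benchmark is standard: if $A=[p^*,1]$ with $p^*>\mu_1$ then $u^*=\mu_1/p^*$ (pool the ``good'' state with a zero-posterior signal via Fact~\ref{fact:optimal-threshold}), and analogously if $A=[0,p^*]$ with $p^*<\mu_1$ then $u^*=(1-\mu_1)/(1-p^*)$; in the remaining cases $\mu_1\in A$, so $u^*=1$.

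For the lower bound $\reg_{\arb}\geq \tfrac12$, I would fix an arbitrary $\pi$ and, for any small $\varepsilon>0$, pit it against the two adversarial utilities inducing $A_+=[\mu_1+\varepsilon,1]$ and $A_-=[0,\mu_1-\varepsilon]$. The key observation is that the events $\{p\geq \mu_1+\varepsilon\}$ and $\{p\leq \mu_1-\varepsilon\}$ are disjoint, hence $u(\pi,A_+)+u(\pi,A_-)\leq 1$, while $u^*(A_+)+u^*(A_-)=\tfrac{\mu_1}{\mu_1+\varepsilon}+\tfrac{1-\mu_1}{1-\mu_1+\varepsilon}\to 2$ as $\varepsilon\to 0$. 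Therefore
\begin{equation*}
\reg(\pi)\;\geq\;\tfrac12\bigl[(u^*(A_+)-u(\pi,A_+))+(u^*(A_-)-u(\pi,A_-))\bigr]\;\geq\;\tfrac12\bigl(u^*(A_+)+u^*(A_-)-1\bigr)\longrightarrow \tfrac12.
\end{equation*}

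For the upper bound, I would exhibit an explicit binary-signal scheme achieving regret $\tfrac12$. Assume w.l.o.g.\ $\mu_1\geq \tfrac12$ (the case $\mu_1<\tfrac12$ follows by swapping the two states and flipping signs). Define $\pi$ to send posterior $1$ with probability $\tfrac12$ and posterior $2\mu_1-1\in[0,1]$ with probability $\tfrac12$; Bayes-plausibility holds since $\tfrac12\cdot 1+\tfrac12\cdot(2\mu_1-1)=\mu_1$. I would then sweep the threshold $p^*$ for both shapes of $A$. Against $A=[p^*,1]$: if $p^*\leq 2\mu_1-1$ both atoms of $\pi$ lie in $A$, yielding regret $0$; if $2\mu_1-1<p^*\leq \mu_1$, only the atom at $1$ lies in $A$, so $u=\tfrac12$ and $u^*=1$, regret $\tfrac12$; if $p^*>\mu_1$, $u=\tfrac12$ and $u^*=\mu_1/p^*\leq 1$, so regret is at most $\tfrac12$. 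An entirely symmetric case analysis handles $A=[0,p^*]$, with the only subtle point being that for $p^*<2\mu_1-1$ one uses $u^*=(1-\mu_1)/(1-p^*)<\tfrac12$ to bound regret by $\tfrac12$.

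The main obstacle here is not analytic difficulty but rather making sure the construction covers every shape of $A$ uniformly: I need to ensure the two atoms of $\pi$ sit on \emph{opposite sides} of the prior $\mu_1$ in such a way that no adversarial threshold $p^*$ can ``hide'' both atoms in the complement of $A$, and that the $p^*\to\mu_1^\pm$ limits (where the benchmark $u^*\to 1$) are balanced. The choice of atoms at $1$ and $2\mu_1-1$ (equidistant from $\mu_1$ at the farthest possible reach) is exactly what makes the bound $\tfrac12$ tight on both sides; combining this with the lower bound yields $\reg_{\arb}=\tfrac12$.
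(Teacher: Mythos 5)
Your proof is correct. The upper bound is the same scheme as the paper (two equiprobable atoms pushed as far apart as Bayes-plausibility allows, with one atom at an extreme point of the simplex); your normalization $\mu_1 \geq \tfrac12$ with atoms at $1$ and $2\mu_1-1$ is the mirror image of the paper's $\mu_1\leq\tfrac12$ with atoms at $0$ and $2\mu_1$, and your finer case sweep over $p^*$ reaches the same conclusion as the paper's two-case argument (which collapses your cases by observing that $[0,2\mu_1]\cap A\neq\emptyset$ forces one atom into $A$, while $[0,2\mu_1]\cap A=\emptyset$ forces $u^*<\tfrac12$ via Markov).

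Your lower bound, however, takes a genuinely different and cleaner route than the paper's. The paper tailors the adversarial region to the scheme: it case-splits on whether $\Pr_\pi[p\in[0,\mu_1]]\leq\tfrac12$ and, in the harder case, calibrates the adversary's threshold $\tfrac{\mu_1}{1-\epsilon}$ against the mass $\pi$ places on $(\mu_1,1]$. You instead fix a symmetric pair $A_+=[\mu_1+\epsilon,1]$, $A_-=[0,\mu_1-\epsilon]$ independently of $\pi$, exploit disjointness to get $u(\pi,A_+)+u(\pi,A_-)\leq 1$, and lower-bound the max regret by the average. This buys you an argument with no case analysis that makes the underlying structure transparent: the adversary's leverage comes precisely from the fact that $u^*(A_+)+u^*(A_-)\to 2$ while any signaling scheme can cover at most one of the two nearly-complementary regions with mass more than $\tfrac12$. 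It is essentially a scheme-independent mixed-adversary bound, in contrast to the paper's scheme-dependent best response, and it mirrors the spirit of Remark~\ref{rem:arb} while avoiding the ``close enough hyperplane'' perturbation step. Both approaches are valid; yours is arguably the more elementary of the two.
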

\begin{proof}
Let us show first that $\reg_{\arb}\leq\frac{1}{2}$. For simplicity of notation, we identify a prior/posterior with the probability it specifies for $\omega=1$. Assume w.l.o.g.~that $\mu_1\leq \frac{1}{2}$. Consider a binary-signal scheme $\pi$ that induces the posteriors $0$ and $2\mu_1$ with equal probability of $\frac{1}{2}$.\footnote{$\pi$ is Bayes-plausible, as the expected posterior probability of $\omega=1$ equals $\mu_1$; therefore, by Remark~\ref{rem:plausibility}, it indeed specifies a signaling scheme.} If $\brackets*{0,2\mu_1}\cap A\neq\emptyset$, then either $0\in A$ or $2\mu_1\in A$ (or both); thus, the adoption probability upon $\pi$ is at least $\frac{1}{2}$, and $\reg\parentheses*{\pi}\leq 1-\frac{1}{2}=\frac{1}{2}$. Otherwise, $A$ has the form $\brackets*{t,1}$ for some $t>2\mu_1$. By Markov's inequality, $\Pr_{s\sim\pi^*}\brackets*{p\parentheses*{s}\in A}\leq\frac{\mu_1}{t}<\frac{1}{2}$, and again $\reg\parentheses*{\pi}\leq\frac{1}{2}$, as desired. Therefore, the regret is at most $\frac{1}{2}$ for every fixed $u_r$; thus, it also holds for every mixture over them.

Conversely, let us show that for every signaling scheme $\pi$, Adversary can pick a Receiver's utility function $u_r$ -- or equivalently, an adoption region $A$ -- s.t.~the regret would be at least $\frac{1}{2}$.

Indeed, fix $\pi$. If $\Pr_{s\sim\pi} \brackets*{p\parentheses*{s}\in \brackets*{0,\mu_1}}\leq \frac{1}{2}$, then Adversary can set $A=\brackets*{0,\mu_1}$: on the one hand, $u^*=1$ is achieved (if Sender knows $A$) by the no-information scheme in which $p\parentheses*{s}=\mu_1$ w.p.~$1$, as $\mu_1\in A$; on the other hand, $u\parentheses*{\pi,u_r}=\Pr_{s\sim\pi} \brackets*{p\parentheses*{s}\in A}=\Pr_{s\sim\pi} \brackets*{p\parentheses*{s}\in \brackets*{0,\mu_1}}\leq\frac{1}{2}$. Hence, the regret is at least $\frac{1}{2}$, as desired.

It remains to consider the case in which $\Pr_{s\sim\pi} \brackets*{p\parentheses*{s}\in\brackets*{0,\mu_1}}>\frac{1}{2}$.

Set $\epsilon:=\frac{1}{2}-\Pr_{s\sim\pi}\brackets*{p\parentheses*{s}\in \parhalf{\mu_1,1}}\in\parhalf*{0,\frac{1}{2}}$ and let Adversary take $A=\brackets*{\frac{\mu_1}{1-\epsilon},1}$.\footnote{Note that $0<\mu_1<\frac{\mu_1}{1-\epsilon}\leq\frac{\mu_1}{1-\frac{1}{2}}=2\mu_1\leq 1$, as $0<\epsilon\leq\frac{1}{2}$ and $0<\mu_1\leq\frac{1}{2}$.} Then $u^*\geq 1-\epsilon$, as upon knowing $A$, Sender could have chosen a signaling scheme implying $p\parentheses*{s}=0$ w.p.~$\epsilon$ and $p\parentheses*{s}=\frac{\mu_1}{1-\epsilon}$ w.p.~$1-\epsilon$ (note that this scheme is, indeed, Bayes-plausible). Since $u\parentheses*{\pi,u_r}=\Pr_{s\sim\pi} \brackets*{p\parentheses*{s}\in A}\leq \Pr_{s\sim\pi} \brackets*{p\parentheses*{s}\in \parhalf*{\mu_1,1}}=\frac{1}{2}-\epsilon$, the regret is at least $\parentheses*{1-\epsilon}-\parentheses*{\frac{1}{2}-\epsilon}=\frac{1}{2}$, as needed.
\end{proof}
\begin{remark}
\label{rem:arb}
One can show that $\reg_{\arb}\geq\frac{1}{2}$ for any $n$ and any $\mu$ using similar methodology to the methodology we applied for $n=2$.
\end{remark}
Indeed, for a given $\pi$, Receiver should pick a hyperplane $H$ passing through $\mu$ that does not pass through any atom in $\pi$ besides, probably, $\mu$. Let $\Delta_1$ be one of the closed parts to which $H$ divides $\Delta\parentheses*{\Omega}$. If $\Pr_{s\sim\pi} \brackets*{p\parentheses*{s}\in \Delta_1}\leq \frac{1}{2}$, then by setting $A=\Delta_1$ Adversary ensures that Sender's expected utility is at most $\frac{1}{2}$, while the no-information scheme could have achieved utility of $1$; and if $Pr_{s\sim\pi} \brackets*{p\parentheses*{s}\in \Delta_1}= \frac{1}{2}+\epsilon$ for some $\epsilon\in\parhalf*{0,\frac{1}{2}}$, then setting $A$ to be a closed subset of $\Delta\parentheses*{\Omega}\setminus\Delta_1$ specified by a close enough hyperplane $H'$ to $H$ ensures that while $u\parentheses*{\pi,u_r}\leq\Pr_{s\sim\pi} \brackets*{p\parentheses*{s}\notin \Delta_1}=\frac{1}{2}-\epsilon$, Sender could have achieved utility of $1-\epsilon$ had she known $A$ by an appropriate binary-signal scheme.

Another case in which we characterize the optimal persuasion explicitly is that of ternary state space and uniform prior.
\begin{proposition}
\label{pro:arbitrary-ternary}
For $n=3$ and $\mu=\parentheses*{\frac{1}{3},\frac{1}{3},\frac{1}{3}}$: $\reg_{\arb}=\frac{1}{2}$. 
\end{proposition}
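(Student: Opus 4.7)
The lower bound $\reg_{\arb}\geq 1/2$ is immediate from Remark~\ref{rem:arb}, so the task reduces to exhibiting a single signaling scheme $\pi^*$ with $\reg(\pi^*)\leq 1/2$. By Observation~\ref{obs:arb_adoption}, Adversary's adoption region $A$ is the intersection of a closed half-plane with the simplex $\Delta(\Omega)$; by the $S_3$-symmetry of the uniform prior we may take $\pi^*$ itself to be $S_3$-invariant and only need to verify the regret bound on a fundamental domain of half-planes modulo permutations of the three states.

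The plan is to split the analysis by whether $A$ contains $\mu$. (i) If $\mu\in A$, then the no-information scheme is optimal, so $u^*(u_r)=1$; it then suffices that $\Pr_{\pi^*}[p\in A]\geq 1/2$ for every closed half-plane $A$ through $\mu$. The natural way to ensure this is to choose $\pi^*$ invariant under the central inversion $p\mapsto 2\mu-p$ (which forces $\supp(\pi^*)$ into the hexagon $\{p\in\Delta(\Omega): p_i\leq 2/3\ \forall i\}$ where the inversion stays in the simplex): the inversion swaps the two open half-planes bounded by any line through $\mu$, so each closed half-plane receives $\pi^*$-probability at least $1/2$. (ii) If $\mu\notin A$, write $A=\{\alpha\cdot p\geq\beta\}$ with $\alpha\cdot\mu<\beta$; by Fact~\ref{fact:optimal-threshold}, $u^*(u_r)$ can be computed in closed form as a piecewise-rational function of $(\alpha,\beta)$ via the knapsack method, and $\Pr_{\pi^*}[p\in A]$ is the $\pi^*$-mass cut out by the hyperplane $\alpha\cdot p=\beta$. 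One then verifies $u^*(u_r)-\Pr_{\pi^*}[p\in A]\leq 1/2$ along the one-parameter family indexed by $\beta$, separately for each combinatorial type of how $\alpha\cdot p=\beta$ meets the boundary of $\Delta(\Omega)$.

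The main obstacle is choosing a $\pi^*$ for which case (ii) holds simultaneously across every direction $\alpha$ and every distance $\beta$: the ``first-guess'' symmetric candidates --- uniform density on the central hexagon, uniform mass on its six vertices, or equal point masses at the three simplex vertices together with the three edge midpoints --- each satisfy (i) but can be checked to violate the regret bound for specific slanted half-planes, because the knapsack witness achieving $u^*(u_r)$ traces a curve in the simplex as $(\alpha,\beta)$ varies and these candidates have too little $\pi^*$-mass near parts of that curve. Calibrating $\pi^*$ so that enough mass sits on the knapsack-optimal posteriors across every direction $\alpha$, while preserving the central symmetry of (i), and then pushing through the resulting piecewise-rational inequalities in $\beta$ for each combinatorial configuration of the cutting hyperplane, is the analytic-geometric computation that Section~\ref{sec:results} flags as the core technical difficulty and that distinguishes this case from the simpler $n=2$ argument of Proposition~\ref{pro:arbitrary-binary}.
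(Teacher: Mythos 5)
Your outline correctly reduces the problem to exhibiting a scheme $\pi^*$ with $\reg(\pi^*)\leq\tfrac{1}{2}$, correctly splits into the cases $\mu\in A$ and $\mu\notin A$, and correctly identifies that case~(ii) is the analytic bottleneck. But the proposal stops precisely at the point where the proof has to begin: you never produce the scheme $\pi^*$. You rule out uniform density on the central hexagon, uniform mass on its vertices, and the vertex/edge-midpoint atoms, and then say that ``calibrating $\pi^*$ \ldots{} is the analytic-geometric computation'' that remains. That is the entire content of the proposition, and leaving it open means no bound $\reg(\pi^*)\leq\tfrac{1}{2}$ has actually been established.

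There is also a structural wrong turn. You argue that to handle case~(i) one should make $\pi^*$ invariant under the central inversion $p\mapsto 2\mu-p$ and observe this ``forces $\supp(\pi^*)$ into the hexagon'' $\{p: p_i\le 2/3\ \forall i\}$. That constraint is sufficient but not necessary for every half-plane through $\mu$ to carry mass $\tfrac12$, and it actually excludes the scheme that works. The paper's scheme is supported on the whole boundary of $\Delta(\Omega)$ (including the vertices, which lie outside the hexagon and have no inverse image in the simplex), and it assigns to each boundary arc mass proportional to the angle it subtends at $\mu$. This angle-proportional measure gives each closed half-plane through $\mu$ exactly mass $\tfrac12$ --- any line through $\mu$ splits the full angle $2\pi$ into two halves of $\pi$ each --- without being centrally symmetric in your sense. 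It is consistent with the ``push mass to the boundary'' intuition inherited from the $n=2$ case, which your hexagon-restricted candidates are not. So even the search space you set up steers away from the construction that resolves the problem. Once the correct $\pi^*$ is fixed, one still must run the case-(ii) knapsack/geometry computation (parametrize the cutting line, locate the optimal binary scheme $\pi^*(u_r)$ via Fact~\ref{fact:optimal-threshold}, and check $u^*-\Pr_{\pi^*}[p\in A]\leq\tfrac12$ combinatorial case by case); your proposal sketches what such a computation would look like but does not carry any of it out, and without the correct $\pi^*$ there is nothing to carry it out on.
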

Although Proposition~\ref{pro:arbitrary-ternary} deals with a specific case, its proof is not trivial at all. The explicit characterizations of the regret in case of ternary state space with non-uniform prior and in case of any larger number of states remain open problems. The formal proof of Proposition~\ref{pro:arbitrary-ternary} is relegated to Appendix~\ref{ap:arbitrary-ternary}. We briefly describe here the main proof ideas.
\begin{proof}[Proof idea of Proposition~\ref{pro:arbitrary-ternary}]
By Remark~\ref{rem:arb}, Adversary can ensure a regret of at least $\frac{1}{2}$. To achieve such a regret, Adversary partitions the simplex $\Delta\parentheses*{\brackets*{3}}$ into two regions by drawing a line through the prior $\mu$ (which is the centroid of the simplex). Thereafter, Adversary sets $A$ to be the half-space that has a weight not above $\frac{1}{2}$ according to $\pi$ (for simplicity, here we ignore the possibility that $\pi$ has an atom on $\mu$).

The above argument can be applied for \emph{every} line passing through $\mu$ which does not contain atoms of $\pi$. This hints at the desirable distributions of posteriors for a regret-minimizing $\pi$: it should assign an equal weight of $\frac{1}{2}$ for every half-space that contains $\mu$ on the boundary. Another useful intuition from Proposition~\ref{pro:arbitrary-binary} is that it is worthwhile for Sender to "push" the weight to the boundary. This brings us to the (somewhat educated) guess of considering the \emph{unique} distribution that is supported on the boundary of $\Delta\parentheses*{\brackets*{3}}$, $\braces*{\parentheses*{p_1,p_2,p_3}\in\Delta\parentheses*{\brackets*{3}}:p_1=0 \text{ or } p_2=0 \text{ or } p_3=0}$, which assigns probability weight of exactly $\frac{1}{2}$ to every half-space containing $\mu$ as a boundary point. The complicated part of the proof is showing that this signaling scheme indeed guarantees a regret of $\frac{1}{2}$ to Sender. Namely, to show that Adversary cannot get a regret greater than $\frac{1}{2}$ by choosing the adoption region to be a half-space that does not contain $\mu$. This follows from geometric arguments and analytical geometry.
\end{proof}
The fact that the regret does not increase from binary to ternary state space is somewhat misleading. The following theorem provides a negative asymptotic result for large state spaces and an arbitrary prior.
\begin{theorem}
\label{thm:arbitrary-asy}
For every number of states $n$ and any prior $\mu$: $1-\frac{2}{\sqrt{n}}\leq \reg_{\arb} \leq 1-\frac{1}{4n^2}$.
\end{theorem}
Theorem~\ref{thm:arbitrary-asy} indicates that not knowing Receiver's utility in large state spaces might be very costly for Sender. For every signaling scheme, there are Receiver's utilities for which adoption occurs with probability almost $0$, while knowing Receiver's utility allows Sender to get adoption probability of almost $1$. This negative result is not very surprising, as we focus here on the most general setting and assume nothing on Receiver's utility. Now we describe the proof idea; the full proof appears in Appendix~\ref{ap:arb}.
\begin{proof}[Proof idea of Theorem~\ref{thm:arbitrary-asy}]
We prove that $\reg_{\arb} \leq 1-\frac{1}{4n^2}$ by describing an explicit signaling scheme ensuring such a regret bound. The more interesting part of the theorem is $\reg_{\arb}\geq 1-\frac{2}{\sqrt{n}}$. To show this part, we consider a specific scenario in which Sender knows that most states ($\sim n-\sqrt{n}$) are \emph{normal} -- they yield a constant negative Receiver's utility upon adoption. Among the remaining $\sim\sqrt{n}$ states, a single \emph{good} state with very high utility is hidden, while the other states are \emph{bad}; for clarity of exposition, assume that for bad states Receiver's utility upon adoption is $-\infty$. Had Sender known which state is the good one, she could have pooled it together with all the normal states; it would have been incentive-compatible for Receiver to adopt after receiving the signal that the state is not bad.

However, Adversary selects the state types uniformly at random; in particular, our ignorant Sender does not know which state is good. The support of any posterior distribution either contains a bad state or does not contain the good state with a high probability, which would cause Receiver to reject.
\end{proof}
\subsection{Monotonic Utilities}
\label{sub:mon}
Let $\reg_{\mon}$ be the regret for monotonic utilities. That is, in Definition~\ref{def:reg_pi}, the supremum is over all Receiver's utilities that are non-decreasing in the state of nature. In this setting, we prove a positive result that provides a full explicit characterization of the regret, for every number of states $n$ and for every prior $\mu$. We saw in the previous subsection that $n$ plays a significant role for arbitrary utilities. In particular, as the number of states increases, so does the uncertainty of Sender in the arbitrary utility case. Hence, one might expect the regret to always increase with the number of states. Surprisingly, our result shows that this intuition is wrong: Sender's partial knowledge on the monotonicity of Receiver's utility turns out to be sufficient to have as good guarantees as for a binary state space.

Recall that $\mu_n$ is the prior probability of state $n$; for monotonic utilities, Receiver gets the (weakly) largest utility upon adoption in this state. The following theorem gives a full characterization of $\reg_{\mon}$ in terms of $\mu_n$. In particular, $\reg_{\mon}$ does not depend on $n$. The regret as a function of $\mu_n$ is demonstrated in Figure~\ref{fig:regret}.
\begin{theorem}
\label{thm:mon}
For every number of states $n$ and every prior $\mu=\parentheses*{\mu_1,...,\mu_n}$:
\begin{align*}
  \reg_{\mon}= \begin{cases}
  \frac{1}{e} &\text{ if } \mu_n \leq \frac{1}{e} \\
  -\mu_n\ln \mu_n &\text{ if } \mu_n > \frac{1}{e}.
  \end{cases}
\end{align*}
\end{theorem}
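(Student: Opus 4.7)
The plan is to prove Theorem~\ref{thm:mon} by reducing the robust persuasion game to a minimax problem over threshold real numbers on both sides, and solving it with explicit equalizing mixed strategies.

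I work in the real-valued state representation from Section~\ref{sec:setting}: the prior is uniform on $[0,1]$, state $n$ corresponds to the top segment of length $\mu_n$, and any monotonic $u_r$ is a non-decreasing function $v:[0,1]\to\mathbb{R}$. By Fact~\ref{fact:optimal-threshold}, $u^*(v)=1-x^*(v)$, where $x^*(v)\in[0,1-\mu_n]$ is the smallest threshold whose upper pool $[x^*,1]$ has non-negative expected $v$. A central monotonicity observation: when $x^*>0$, we must have $\mathbb{E}_\mu[v]<0$, so the low pool $[0,t]$ of any threshold scheme $\pi_t$ has mean at most $\mathbb{E}_\mu[v]<0$ and rejects, while the high pool $[t,1]$ adopts iff $t\ge x^*$. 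When $x^*=0$, Adversary chooses a monotonic $v$ with $\mathbb{E}_\mu[v]=0$ concentrated near the top (e.g., large positive on state $n$'s interval, slight negative elsewhere), again making low pools reject while $u^*=1$. Thus, against threshold Sender schemes, the regret is $(1-x^*)-(1-t)\mathbbm{1}[t\ge x^*]$, reducing the game to Sender choosing $t\in[0,1]$ and Adversary choosing $x^*\in[0,1-\mu_n]$.

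For the upper bound I construct Sender's mixed strategy $F^*$: density $f^*(t)=1/(1-t)$ on $[0,b]$ with $b:=\min(1-\mu_n,1-1/e)$, plus an atom of mass $1+\ln\mu_n$ at $t=1-\mu_n$ when $\mu_n>1/e$. Since $\int_0^b f^*(t)\,dt=-\ln(1-b)$ equals $1$ when $b=1-1/e$ and complements the atom to $1$ otherwise, $F^*$ is a proper probability. Computing $u(F^*,v)=\int_{x^*}^{b}f^*(t)(1-t)\,dt+\mu_n(1+\ln\mu_n)\mathbbm{1}[x^*\le 1-\mu_n,\,\mu_n>1/e]$ yields a regret of exactly $1/e$ when $\mu_n\le 1/e$ and of $-\mu_n\ln\mu_n$ when $\mu_n>1/e$, for every $x^*\in[0,1-\mu_n]$.

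For the lower bound I mirror this with Adversary's mixed strategy $G^*$: atom of mass $M:=\max(\mu_n,1/e)$ at $x^*=0$ (realized by the hostile high-concentration utility) and density $g^*(x)=M/(1-x)^2$ on $(0,1-M]$ (realized by indicator utilities $v^{x^*}(s):=-\mathbbm{1}[s<x^*]$). Total mass is $M+(1-M)=1$; one computes $(1-t)G^*(t)=M$ for $t\in[0,1-M]$ (so $\max_t(1-t)G^*(t)=M$) and $\mathbb{E}_{G^*}[1-x^*]=M+M\int_0^{1-M}(1-x)^{-1}dx=M(1-\ln M)$. Against any threshold mixture $F$, the expected regret equals $\mathbb{E}_{G^*}[1-x^*]-\mathbb{E}_F[(1-t)G^*(t)]\ge M(1-\ln M)-M=-M\ln M$, which evaluates to $1/e$ if $M=1/e$ and $-\mu_n\ln\mu_n$ if $M=\mu_n$.

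The main technical hurdle is to extend the lower bound from threshold Sender strategies to arbitrary schemes. My approach is a WLOG reduction: for any $\pi$, I construct a threshold mixture $\widetilde F$ with $\sup_{u_r}\reg(\widetilde F,u_r)\le\sup_{u_r}\reg(\pi,u_r)$, by rearranging each posterior $p$ of $\pi$ into a convex combination of threshold posteriors supported on upper sets $[t,1]$ (preserving Bayes-plausibility). Since monotone adoption regions are half-spaces induced by non-decreasing vectors, this rearrangement only weakly decreases Sender's utility against every monotonic $u_r$, hence only weakly increases regret; combined with Sion's minimax theorem from Section~\ref{sub:adversarial} and the threshold-game analysis above, this yields $\reg_\mon\ge-M\ln M$, matching the upper bound and completing the proof.
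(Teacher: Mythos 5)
Your upper bound and the reduction of threshold mixtures to the abstract game $G_{\mu_n}$ with payoff $g(x,y)=(1-x)-(1-y)\mathbf{1}[y\ge x]$ are correct and essentially identical to the paper (Lemma~\ref{lem:threshold-game} and the first half of the proof of Theorem~\ref{thm:mon}). The gap is in the lower bound, and it is a real one, not just a presentational slip.

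First, the ``WLOG rearrangement'' step is stated with an inconsistent direction: you claim to build $\widetilde F$ with $\sup_{u_r}\reg(\widetilde F,u_r)\le\sup_{u_r}\reg(\pi,u_r)$, but then assert that the rearrangement ``weakly decreases Sender's utility against every monotonic $u_r$, hence only weakly increases regret,'' which gives the \emph{opposite} inequality. Even with the sign repaired, there is no reason a single threshold mixture should pointwise dominate an arbitrary $\pi$ against \emph{all} monotone $u_r$ simultaneously, and a general posterior cannot be written as a Bayes-plausible convex combination of ``upper-set'' posteriors; this claim is neither proved nor, as far as I can tell, true.

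Second, and more fundamentally, your concrete realization of the adversary's optimal mixed strategy is not tight against non-threshold schemes. You realize the density $g^*(x)=M/(1-x)^2$ on $(0,1-M]$ by the indicator utilities $v^{x^*}(s)=-\mathbbm{1}[s<x^*]$, whose adoption region is ``posteriors supported on $[x^*,1]$'' and thus depends only on $m(p):=\min\supp(p)$. This creates a hedging opportunity for Sender. Take $\mu_n\le 1/e$ (so $M=1/e$) and let $\pi$ be the ``reveal the third'' scheme with posteriors $p_1,p_2,p_3$ uniform on $[0,\tfrac13],[\tfrac13,\tfrac23],[\tfrac23,1]$, each with probability $\tfrac13$. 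Since $\tfrac23>1-\tfrac1e$, $p_3$ adopts against \emph{every} $v^{x^*}$ in the density's support and also against the $x^*=0$ atom (it is more top-concentrated than $\mu$); $p_2$ adopts against $v^{x^*}$ for $x^*\le\tfrac13$, contributing $M/2$ from the density; $p_1$ never adopts. Summing, Sender's utility against your $G^*$ is $\tfrac13(1-M)+\tfrac13\cdot\tfrac{M}{2}+\tfrac13\cdot M=\tfrac13+\tfrac{1}{6e}$, while $\mathbb{E}_{G^*}[1-x^*]=\tfrac{2}{e}$; the regret is $\tfrac{2}{e}-\tfrac13-\tfrac{1}{6e}=\tfrac{11}{6e}-\tfrac13<\tfrac1e$ (equivalent to $5<2e$). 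So your adversary mixture does \emph{not} certify $\reg_\mon\ge 1/e$: a non-threshold scheme beats it.

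What makes the paper's argument go through is precisely the choice of adversary utilities $u_r(i,1)=-\mu_n$ for all $i<n$ and $u_r(n,1)=1-\mu_n-t$. Because every non-top state gets the \emph{same} utility, any signaling scheme faced with this adversary can be collapsed to a two-state problem over $\{0,n\}$; the concavification of the resulting one-dimensional $u'(q)$ is piecewise linear with a segment to $q=1$, so the best reply is literally a threshold scheme. That is the step your proposal is missing. To repair your lower bound, you would need to either (i) adopt the paper's realization, under which the binary-state collapse + concavification replaces your ``rearrangement,'' or (ii) find some other realization of $o^*_x$ by monotone utilities under which you can prove that Sender's best reply is a threshold scheme; the indicator utilities do not have this property.
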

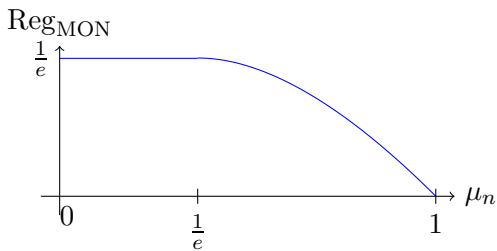
\begin{figure}[h]
\caption{The regret as a function of the prior probability of the highest state.}
    \label{fig:regret}
    \centering
    \begin{tikzpicture}[scale=5]
    \draw[->] (-0.05,0) -- (1.05,0);
    \draw[->] (0,-0.05) -- (0,0.4);
    \draw[blue] plot [domain=0.367:1,samples=100] (\x,{-\x*ln(\x)});
    \draw[blue] (0,0.367)--(0.367,0.367);

    \draw (0.367,-0.02)--(0.367,0.02);
    \draw (1,-0.02)--(1,0.02);
    
    \node[below] at (0.367,-0.02) {$\frac{1}{e}$};
    
    \node[below] at (1,-0.02) {1};
    
    \node[right] at (1.05,0) {$\mu_n$};
    
    \node[above] at (0,0.4) {$\reg_{\mon}$};
    
    \node[below] at (0.02,0) {0};
    
    \node[left] at (0,0.367) {$\frac{1}{e}$};
    
    \end{tikzpicture}
\end{figure}
An intuition for why parameters other than $\mu_n$ turn out to be irrelevant appears in the proof sketch below; the relevance of $\mu_n$ is intuitively connected to the observation that Adversary must choose $u_r\parentheses*{\omega_n,1}\geq 0$ to achieve a regret greater than 0 (as otherwise, Receiver would never adopt). Sender can utilize this observation in her favour: e.g., truthfully revealing that $\omega=n$ when it occurs yields her a utility of at least $\mu_n$. As we shall see, she can utilize it in a more clever manner, which decreases the regret even more. A possible interpretation of Theorem~\ref{thm:mon} might be that unless the highest state is very likely (its prior probability is greater than $\frac{1}{e}$), the regret equals $\frac{1}{e}$ regardless of the prior and of the number of states. Below we present the proof idea; for the full proof, see Appendix~\ref{ap:mon}. That appendix also contains \emph{an explicit description of the optimal Sender's and Adversary's strategies} (Lemma~\ref{lem:threshold-game}).
\begin{proof}[Proof idea of Theorem~\ref{thm:mon}]
The proof relies on the knapsack characterization of optimal signaling schemes in the standard persuasion model. The characterization of optimal policies by~\cite{RenaultSV17} states that every persuasion problem in the standard setting with binary-action Receiver has an optimal $x$-threshold scheme for some $x$ (see Definition~\ref{def:threshold}).

In our setting, Receiver's utility is unknown to Sender; hence, the optimal threshold $x$ is unknown. Nevertheless, we make an educated guess that threshold policies remain useful -- and in fact, regret-minimizing -- also in our setting. More concretely, we consider signaling schemes in which the threshold $y$ is drawn \emph{at random}, and thereafter the $y$-threshold scheme is implemented. After restricting Sender's schemes to the class of (mixed) threshold schemes, we also can view Adversary's choice of utility $u_r$ simply as a choice of an optimal threshold $x\in \brackets*{0,1}$ (see Fact~\ref{fact:optimal-threshold}). Namely, instead of choosing a utility for Receiver, Adversary chooses the threshold $x$ that is optimal in the standard persuasion model. This reduces the complicated original zero-sum game $G$ (see Subsection~\ref{sub:adversarial}), in which Sender chooses a signaling scheme and adversary chooses a distribution over possible functions $u_r$, to a much simpler zero-sum game $G_{\mu_n}$, with both players simply choosing thresholds $x,y\in \brackets*{0,1}$.

The utility in $G_{\mu_n}$ is given by: 
\begin{align*}
    &g\parentheses*{x,y}:=\parentheses*{1-x} - \parentheses*{1-y}\textbf{1}_{y\geq x},
\end{align*}
where $1-x$ is $u^*$ (i.e., the optimal utility in the standard persuasion model) and $\parentheses*{1-y}\textbf{1}_{y\geq x}$ is the utility of our ignorant Sender: if $y\geq x$, then Receiver adopts after observing the high signal and Sender gets utility of $1-y$; if $y<x$, then Receiver rejects after observing the high signal (recall that $y$ is Receiver's indifference point) and Sender gets a utility of $0$; anyway, Receiver rejects when observing the low signal.

In fact, not all thresholds $x\in \brackets*{0,1}$ might be optimal for some utility, but only $x\in \brackets*{0,1-\mu_n}$. Indeed, the highest-utility state should be included entirely in the knapsack, as otherwise adoption would never occur and the regret would be zero. As Sender's utility is decreasing in $y$ for $y\geq x$ and we know that the optimal threshold is in $\brackets*{0,1-\mu_n}$, we can restrict ourselves to considering Sender's and Adversary's thresholds $x,y$ from $\brackets*{0,1-\mu_n}$.

The analysis of $G_{\mu_n}$ (see Lemma~\ref{lem:threshold-game} in Appendix~\ref{ap:mon}) leads to the game value that appears in the theorem, and also provides the optimal strategies for Sender and Adversary.

After having the (allegedly) optimal strategies of both players in hand, we still have to verify that Sender cannot ensure a smaller regret than the value of $G_{\mu_n}$. To this end, we show that for a simple, yet optimal strategy of Adversary, Sender cannot gain more than the value of $G_{\mu_n}$ even by choosing arbitrary (not necessarily threshold) schemes.
\end{proof}
It is interesting to note that the regret-minimizing signaling scheme randomizes the threshold over a continuum-sized support $\brackets*{0,\min\braces*{1-\mu_n,1-\frac{1}{e}}}$, and, in particular, uses a continuum of signals; this stands in a sharp contrast to standard persuasion in which binary signals suffice to get the optimal utility. Inside the continuum-sized support, the density function is given by $f\parentheses*{y}:=\frac{1}{1-y}$ and, in particular, lower thresholds (closer to no-information) are chosen with lower probability.\footnote{To be precise, the threshold distribution also has an atom of weight $1+\ln\mu_n$ on $1-\mu_n$ (i.e., Sender reveals whether $\omega$ is the highest-utility state $n$ or not) if $\mu_n>\frac{1}{e}$.} Figure \ref{fig:density} demonstrates the density function of Sender's thresholds for $\mu_n=\frac{1}{4}$.
\begin{figure}[h]
\caption{The density function of Sender's threshold when $\mu_n=\frac{1}{4}$.}
\label{fig:density}
      \centering
    \begin{tikzpicture}[xscale=4]
    \draw[->] (-0.05,0) -- (1.05,0);
    \draw[->] (0,-0.05) -- (0,3.0);
    \draw[blue] plot [domain=0.0:0.63,samples=100] (\x,{1/(1-\x)});
    
    \draw[dashed] (0.63,0)--(0.63,2.72);

    \draw (0.12,-0.02)--(0.12,0.02);
    \draw (0.75,-0.02)--(0.75,0.02);
    \draw (1,-0.02)--(1,0.02);
    
    \node[below] at (0.63,-0.02) {$1-\frac{1}{e}$};
    
    \node[below] at (1,-0.02) {1};
    
    \node[right] at (1.05,0) {Sender's threshold $y$};
    
    \node[above] at (0,3.0) {$f$};
    
    \node[below] at (0.02,0) {0};
    
    \end{tikzpicture}
\end{figure}
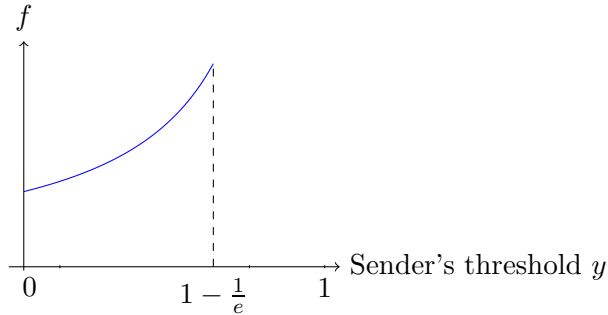

Note that our optimal robust signaling scheme is constructive and has a closed-form description, and is, therefore, polynomial-time computable. For example, for $\mu_n\leq 1/e$, one can implement the optimal robust scheme -- characterized in Lemma~\ref{lem:threshold-game} -- by sampling $z\sim U\brackets*{0,1}$ and adopting the $F^{-1}\parentheses*{z}$-threshold scheme, where $F\parentheses*{y}=-\ln\parentheses*{1-y}$ for $y\in\brackets*{0,1-1/e}$.
\subsection{Multidimensional Monotonic Utilities}
\label{sub:md}
In this setting, we denote the regret by $\reg_{\mul}$.
We assume that $\Omega=\times_{j\in \brackets*{k}} \Omega_j$, where $\Omega_j=\brackets*{n_j}$. We require Receiver's utility upon adoption to be monotonic in every dimension $j\in \brackets*{k}$. Namely, $\forall 1\leq j\leq k\;\;\forall\omega_j'\leq \omega_j''\;\;\forall\omega_{-j}': u_r\parentheses*{\omega_j'\omega_{-j}',1}\leq u_r\parentheses*{\omega''_j\omega_{-j}',1}$. We refer to the dimension $k$ as a constant. Indeed, for the interpretation of the dimension as the number of product attributes, it is natural to assume that $k$ is not too large. The parameters that determine the size of the problem are $n_1,...,n_k$ (i.e., the numbers of quality levels of each attribute/dimension). We show that the prior plays a significant role in determining whether a constant regret (i.e., bounded away from one) can be guaranteed: on the one hand, for arbitrary priors, the minimal regret might not be bounded away from $1$; on the other hand, for \emph{product} priors, a constant regret can be guaranteed.

For general priors, we have the following corollary from the arbitrary utility case.
\begin{corollary}
\label{cor:md-general-prior}
For $k=2$ and $n_1=n_2=m$ (for any $m\geq 1$), there exists a prior $\mu_m \in \Delta\parentheses*{\brackets*{m}^2}$ for which $Reg_{\mul}\parentheses*{m} = 1-O\parentheses*{\frac{1}{\sqrt{m}}}$.
\end{corollary}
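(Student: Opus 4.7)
The overall plan is to reduce to the arbitrary-utility lower bound of Theorem~\ref{thm:arbitrary-asy}. The key geometric observation is that an antichain in $[m]^2$ with respect to the coordinate-wise partial order imposes \emph{no} internal monotonicity constraint on $u_r$: any real-valued function on such an antichain extends monotonically to all of $[m]^2$. By concentrating the prior on a size-$m$ antichain, Adversary recovers the full freedom of the arbitrary-utility model on $m$ states.

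Concretely, I would take $D_m:=\{(i,\,m+1-i) : i\in[m]\}$, a size-$m$ antichain of $[m]^2$, and let $\mu_m$ be supported on $D_m$ via $\mu_m(i,m+1-i)=\nu(i)$ for any prior $\nu\in\Delta([m])$ (Theorem~\ref{thm:arbitrary-asy} applies to every prior, so $\nu$ can be chosen freely, e.g., uniform). To respect the positivity convention $\mu_i>0$, one can smear a vanishing $\epsilon$ mass uniformly over $[m]^2\setminus D_m$ and let $\epsilon\downarrow 0$; the expected payoffs depend linearly on $\mu$, so this limit is harmless.

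The central intermediate step is an extension lemma: for any $v:D_m\to\mathbb{R}$, define
\[
\tilde v(x_1,x_2):=\max\bigl(\{v(y) : y\in D_m,\; y_1\le x_1,\; y_2\le x_2\}\cup\{-M\}\bigr)
\]
with $M$ a sufficiently large constant. Then $\tilde v$ is monotonic in each coordinate and coincides with $v$ on $D_m$: since distinct antichain elements are incomparable, for $y^\ast=(i,m+1-i)\in D_m$ the set inside the max contains only the single antichain term $v(y^\ast)$ (together with the dominated term $-M$). Thus, as Adversary ranges over all monotonic $u_r$ on $[m]^2$, the restriction $u_r|_{D_m}$ ranges over \emph{all} real-valued functions on $D_m$.

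Finally, since $\supp(\mu_m)=D_m$, every Bayes-plausible posterior is supported on $D_m$ (zero-prior states contribute zero to every posterior), so the adoption event under any signaling scheme $\pi$ depends only on $u_r|_{D_m}$. In particular, both $u^*(u_r)$ and $u(\pi,u_r)$ coincide with their counterparts in the \emph{arbitrary-utility} persuasion problem on the $m$-state space $D_m$ with prior $\nu$. Applying Theorem~\ref{thm:arbitrary-asy} to that problem gives $\reg_{\mul}(m)\ge 1-2/\sqrt{m}=1-O(1/\sqrt{m})$. The main thing to be careful about is exactly this support-invariance argument — ensuring Sender, allowed to use signals depending on the full space $[m]^2$, gains nothing over the restricted $m$-state problem — which I expect to be the only real technical step, as everything else is packaging.
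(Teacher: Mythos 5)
Your proof takes essentially the same route as the paper's: both concentrate the prior (up to vanishing $\epsilon$-smearing to respect $\mu_i>0$) on the size-$m$ antidiagonal $\{(i,j):i+j=m+1\}$, observe that monotonicity in each coordinate imposes no constraint on this antichain, and then invoke Theorem~\ref{thm:arbitrary-asy} on the induced $m$-state arbitrary-utility instance. Your explicit max-based extension lemma and the support-invariance check merely make precise what the paper states in one line ("monotonicity in each dimension imposes no restriction, while the remaining states have tiny weights, so their effect on the regret is negligible"), so the arguments are the same in substance.
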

Corollary~\ref{cor:md-general-prior} follows from Theorem~\ref{thm:arbitrary-asy} and the fact that one can assign arbitrarily small probability weights in the prior to all the states except for the $m$ states $\braces*{\parentheses*{i,j}\in \Delta\parentheses*{\brackets*{m}^2}:i+j=m+1}$. On these $m$ states, the monotonicity in each dimension imposes no restriction, while the remaining states have tiny weights, so their effect on the regret is negligible.

Therefore, to obtain positive results, one must restrict attention to particular classes of priors. One natural class is \emph{product priors}: $\mu=\times_{j\in \brackets*{k}} \mu_j $ for some $\mu_j \in \Delta(\brackets*{n_j})$. For this class of priors, we have the following result, which might be viewed as positive for small values of $k$.
\begin{proposition}
\label{pro:pos-md}
For every $k$, every sequence of attribute quality level amounts $n_1,...,n_k$ and every product prior $\mu =\times_{j\in \brackets*{k}} \mu_j$: $Reg_{\mul} \leq 1-2^{-k}$.
\end{proposition}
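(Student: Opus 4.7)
The plan is to proceed by induction on the number of dimensions $k$, exploiting the product structure $\mu = \mu_1 \times \cdots \times \mu_k$. The base case $k=1$ follows immediately from Theorem~\ref{thm:mon}, which yields $\reg_{\mon} \leq 1/e < 1/2 = 1 - 2^{-1}$, so the proposition holds at the starting rung of the induction.

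For the inductive step, suppose a scheme $\pi^{(k-1)}$ achieves regret at most $1 - 2^{-(k-1)}$ on every $(k-1)$-dimensional monotonic problem with a product prior. I would construct $\pi^{(k)}$ as an equal-probability mixture of two sub-schemes: (a)~truthfully reveal $\omega_k$ and then run $\pi^{(k-1)}$ on $\omega_{-k}$, and (b)~reveal nothing about $\omega_k$ and run $\pi^{(k-1)}$ on $\omega_{-k}$. The product structure is crucial here: whether Receiver conditions on $\omega_k$ (branch (a)) or averages over it (branch (b)), the marginal on $\omega_{-k}$ remains exactly $\mu_{-k}$. Moreover, monotonicity in the first $k-1$ coordinates is preserved, since for each $x$ the sliced utility $u_r^x(\omega_{-k},1) := u_r(\omega_{-k},x,1)$ is monotonic in each remaining dimension, and so is the averaged utility $\tilde u_r(\omega_{-k},1) := \mathbb{E}_{\omega_k \sim \mu_k}[u_r(\omega_{-k},\omega_k,1)]$ by linearity of expectation. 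Hence the inductive hypothesis applies to both sub-problems.

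The main obstacle lies in the regret analysis. Combining the inductive hypothesis on both branches yields $u(\pi^{(k)}, u_r) \geq \tfrac{1}{2}\bigl(\mathbb{E}_{\omega_k}[u^*(u_r^{\omega_k})] + u^*(\tilde u_r)\bigr) - (1 - 2^{-(k-1)})$, so it suffices to establish the comparison $u^* \leq \tfrac{1}{2}\bigl(\mathbb{E}_{\omega_k}[u^*(u_r^{\omega_k})] + u^*(\tilde u_r)\bigr) + 2^{-k}$ between the $k$-dim optimum $u^*$ and the $(k-1)$-dim optima of the two branches. The easy directions $\mathbb{E}_{\omega_k}[u^*(u_r^{\omega_k})] \leq u^*$ and $u^*(\tilde u_r) \leq u^*$ are immediate, since both correspond to feasible $k$-dim strategies (``reveal $\omega_k$ then play optimally'' and ``ignore $\omega_k$ then play optimally'', respectively). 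The delicate direction is to show that their \emph{average} is not too far below $u^*$; this is where monotonicity combined with the product prior must enter in a non-trivial way. A natural attack is a case split on whether $\mu \in A(u_r)$: in the former case $u^* = 1$, but the no-information component embedded in branch (b) (applied when $\pi^{(k-1)}$ sends a ``high'' signal covering the full prior) already secures utility $\geq 1/2$ on its own, giving regret $\leq 1/2 \leq 1 - 2^{-k}$; otherwise $u^* < 1$ and one must argue that pooling across distinct values of $\omega_k$ cannot gain more than a bounded factor over the better of the two extreme branches, so that the averaged $(k-1)$-dim optima lose at most $2^{-k}$ relative to $u^*$. Formalizing this factor-of-two bound on the ``pooling gain'' in the monotonic setting with product prior is the technical crux; the extra slack $1/e$ provided by Theorem~\ref{thm:mon} (as opposed to the $1/2$ we are allowed in the base case) may be useful for closing any residual gap in the inductive step.
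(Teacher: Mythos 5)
Your route is genuinely different from the paper's. The paper does not induct on $k$ at all: it fixes a single explicit signaling scheme -- the knapsack scheme that reveals whether the realized state is coordinate-wise above its per-dimension median -- and settles the claim in a few lines by a case split (either the pooled ``above median'' signal leads to adoption, yielding Sender at least $2^{-k}$, or some state coordinate-wise above median has negative utility, in which case monotonicity forces every state coordinate-wise below median to be rejected, which caps the optimum).

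There is a genuine gap in your proposal, and it is fatal rather than merely cosmetic. The inductive step you outline boils down to the comparison
\begin{equation*}
u^*(u_r)\;-\;\tfrac12\Bigl(\mathbb{E}_{\omega_k}\bigl[u^*(u_r^{\omega_k})\bigr]+u^*(\tilde u_r)\Bigr)\;\le\;2^{-k},
\end{equation*}
which you flag as the ``technical crux'' but do not prove. That inequality is in fact false. Take $k=2$, $\Omega=\{1,2\}^2$, the uniform product prior, and
\begin{equation*}
u_r(1,1,1)=-1,\quad u_r(1,2,1)=-\tfrac12,\quad u_r(2,1,1)=-1,\quad u_r(2,2,1)=1-\epsilon
\end{equation*}
for small $\epsilon>0$; this is monotonic in each coordinate. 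Peeling off $\omega_2$: both $\tilde u_r(1,1)=-3/4$ and $\tilde u_r(2,1)=-\epsilon/2$ are negative, so $u^*(\tilde u_r)=0$; the slice $\omega_2=1$ is all-negative so $u^*(u_r^{1})=0$, while the slice $\omega_2=2$ adopts at the prior so $u^*(u_r^{2})=1$, giving $\mathbb{E}_{\omega_2}[u^*(u_r^{\omega_2})]=1/2$. Meanwhile the full two-dimensional knapsack pools $(2,2)$ and $(1,2)$ and then pads with mass from the $-1$ states, giving $u^*=5/8-\epsilon/4$. The left-hand side is then $3/8-\epsilon/4>1/4=2^{-k}$ for all $\epsilon<1/2$. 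So the required inequality cannot hold as stated. You speculate that the extra slack $1/2-1/e$ from Theorem~\ref{thm:mon} in the base case could close ``any residual gap''; it does happen to save this particular instance at $k=2$, but your induction carries only the bound $1-2^{-(k-1)}$, not the sharper regret, so the slack does not propagate, and you give no argument that it could. Separately, the reasoning you sketch for the case $\mu\in A(u_r)$ -- that ``the no-information component embedded in branch (b)'' secures utility at least $1/2$ -- does not follow: the regret bound on $\pi^{(k-1)}$ constrains its regret, not its support, and nothing guarantees it places probability near $1/2$ on an uninformative signal. As it stands the proposal is not a proof, and the inductive route appears to require a substantially more delicate analysis than the paper's short direct argument.
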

The full proof appears in Appendix~\ref{ap:mul}; here we only sketch the main ideas. We conjecture that a similar positive result can be obtained not only for product priors, but also for positively correlated priors.
\begin{proof}[Proof idea of Proposition~\ref{pro:pos-md}]
We present a simple signaling scheme that achieves the desired regret: Sender reveals whether the product quality in all the attributes is above median, where median is calculated according to the prior $\mu_j$. Depending on $u_r$, such a scheme either always leads to adoption if all the attribute qualities are above median -- yielding expected Sender's utility of at least $2^{-k}$ -- or not -- and then even had Sender known $u_r$, her expected utility could not have exceeded $1-2^{-k}$, as adoption could never occur with all the attribute qualities being below median.
\end{proof}
\section{Adversarial Approximation}
\label{sec:other-robustnesses}
In the adversarial approximation approach, the performance of a signaling scheme $\pi$ is measured by $\frac{u\parentheses*{\pi,u_r}}{u^*\parentheses*{u_r}}$, i.e., the \emph{ratio} between the utility Sender achieves with the scheme $\pi$ and the optimal utility Sender can get upon knowing $u_r$. If $u^*\parentheses*{u_r}=0$ (i.e., even the knowledgeable Sender cannot achieve any positive utility), we define the ratio to be $1$.

Analogically to Definitions~\ref{def:reg_pi} and~\ref{def:reg} of the additive regret, define the adversarial approximation guarantee via:
\begin{align*}
    &\apr\parentheses*{\pi}:=\inf_{u_r}\braces*{\frac{u\parentheses*{\pi,u_r}}{u^*\parentheses*{u_r}}} \text{ and } \\
    &\apr:=\sup_{\pi} \apr\parentheses*{\pi}.
\end{align*}
Similarly to the regret, we extend the definition of $\apr$ to $\apr_{\arb}$, $\apr_{\mon}$ and $\apr_{\mul}$ according to the class of considered Receiver's utility functions.

Note that $0\leq \apr \leq 1$, where $\apr \approx 0$ means that only a negligible fraction of the potential Sender's utility (i.e., with the knowledge of $u_r$) can be guaranteed. Our results in this section are largely negative, motivating our focus on regret minimization.

We start with the general Proposition~\ref{pro:reg-apr} that connects the notions of regret ($\reg$) and adversarial approximation guarantee ($\apr$).
\begin{proposition}
\label{pro:reg-apr}
$\apr \leq \frac{1}{\reg}-1$.
\end{proposition}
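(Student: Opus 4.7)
The plan is to first establish the stronger per-scheme inequality $\apr(\pi) + \reg(\pi) \leq 1$ for every signaling scheme $\pi$, and then weaken it to obtain the stated bound. I fix an arbitrary scheme $\pi$ and unpack the definition $\apr(\pi) = \inf_{u_r}\{u(\pi, u_r)/u^*(u_r)\}$: for every $u_r$ with $u^*(u_r) > 0$, this yields $u(\pi, u_r) \geq \apr(\pi) \cdot u^*(u_r)$; the same inequality holds trivially when $u^*(u_r) = 0$ since then $u(\pi, u_r) = 0$ as well. Subtracting from $u^*(u_r)$ and using that Sender's expected utility is a probability, so $u^*(u_r) \leq 1$, gives
\begin{align*}
    u^*(u_r) - u(\pi, u_r) \leq (1 - \apr(\pi)) \cdot u^*(u_r) \leq 1 - \apr(\pi).
\end{align*}
Taking the supremum over $u_r$ yields the per-scheme relation $\reg(\pi) \leq 1 - \apr(\pi)$.

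Next I convert this per-scheme bound into the game-value bound stated in the proposition. For $\reg > 0$, the elementary inequality $1 - r \leq \frac{1}{r} - 1$ for $r \in (0, 1]$ (which rearranges to $(1 - r)^2 \geq 0$, or equivalently to AM--GM $r + \frac{1}{r} \geq 2$), applied with $r = \reg(\pi)$, gives $1 - \reg(\pi) \leq \frac{1}{\reg(\pi)} - 1$. Since $\reg = \inf_{\pi'} \reg(\pi') \leq \reg(\pi)$, I also have $\frac{1}{\reg(\pi)} - 1 \leq \frac{1}{\reg} - 1$. Chaining yields $\apr(\pi) \leq \frac{1}{\reg} - 1$ for every $\pi$, and taking $\sup_\pi$ gives $\apr \leq \frac{1}{\reg} - 1$. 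The corner case $\reg = 0$ is trivial since the right-hand side then equals $+\infty$ while $\apr \leq 1$ always.

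I do not anticipate any real obstacle: the argument is essentially a two-step manipulation of the definitions combined with the fact that Sender's utility is bounded by $1$. In fact, the first step alone already establishes the strictly sharper bound $\apr + \reg \leq 1$; the looser form $\apr \leq \frac{1}{\reg} - 1$ stated in the proposition presumably serves as a more interpretable way of saying that low regret implies a correspondingly high approximation guarantee (e.g., $\reg \to 0$ forces $\apr \to 1$).
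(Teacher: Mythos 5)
Your proof is correct, and it actually establishes something strictly sharper than the paper's statement before deliberately weakening it. The paper's proof picks, for each $\pi$ and $\epsilon>0$, a near-worst-case adversary $u_r^\epsilon$ with $u^*(u_r^\epsilon)-u(\pi,u_r^\epsilon)\geq\reg-\epsilon$, then lower-bounds the denominator by $u^*(u_r^\epsilon)\geq\reg-\epsilon$ (using $u\geq 0$) and upper-bounds the numerator by $u(\pi,u_r^\epsilon)\leq 1-\reg+\epsilon$ (using $u^*\leq 1$), and divides. You instead rearrange the definition of $\apr(\pi)$ into the per-$u_r$ inequality $u(\pi,u_r)\geq\apr(\pi)\cdot u^*(u_r)$, subtract, and use $u^*\leq 1$ to get the clean per-scheme bound $\reg(\pi)\leq 1-\apr(\pi)$; passing to $\reg\leq\reg(\pi)$ and applying $1-r\leq\frac{1}{r}-1$ then recovers the stated form. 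The underlying ingredients are the same ($u\leq u^*\leq 1$ and the definitional near-tightness of the inf/sup), so this is not a fundamentally different route, but your decomposition is cleaner and makes explicit the sharper relation $\apr+\reg\leq 1$, which the paper's manipulation passes through implicitly but does not record. Your handling of the degenerate cases ($u^*(u_r)=0$, and $\reg=0$) is also correct.
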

This fact holds in any adversarial setting with utilities (revenues) in $\brackets*{0,1}$ (and not only in our context of persuasion).
\begin{proof}[Proof of Proposition~\ref{pro:reg-apr}]
By Definitions~\ref{def:reg_pi} and~\ref{def:reg}, for every signaling scheme $\pi$ and every $\epsilon>0$, there exists $u_r^\epsilon$ s.t.~$u^*\parentheses*{u_r^\epsilon}-u\parentheses*{\pi,u_r^\epsilon}\geq \reg-\epsilon$. In particular, $u^*\parentheses*{u_r^\epsilon} \geq \reg-\epsilon$ and $u\parentheses*{\pi,u_r^\epsilon}\leq 1-\reg+\epsilon$. Furthermore, 
$\apr\parentheses*{\pi}\leq\frac{u\parentheses*{\pi,u_r^\epsilon}}{u^*\parentheses*{u_r^\epsilon}}\leq\frac{1-\reg+\epsilon}{\reg-\epsilon}$. Taking $\epsilon\to 0$ yields that $\apr\parentheses*{\pi}\leq \frac{1-\reg}{\reg}=\frac{1}{\reg}-1$. As it holds for every $\pi$, the proposition follows.
\end{proof}
Proposition~\ref{pro:reg-apr} allows to translate all the negative results on the regret into negative results on adversarial approximation: if $\reg \to 1$, then $\apr \to 0$, with at least the same convergence rate. Concretely, the negative results for arbitrary utilities (Theorem~\ref{thm:arbitrary-asy}) and multidimensional monotonic utilities (Corollary~\ref{cor:md-general-prior}) have analogues in the adversarial approximation approach.

A natural question is whether our main positive result (Theorem~\ref{thm:mon}) remains valid for the adversarial approximation setting. Namely, can a constant fraction of the optimal Sender's utility be guaranteed without knowing the cardinal Receiver's preferences, but only the ordinal ones (i.e., monotonicity)? Our next result shows that the answer is negative. Furthermore, we provide an exact characterization for the value in this case, for every number of states $n$ and every prior $\mu$.
\begin{theorem}
\label{thm:apr}
For every number of states $n$ and every prior $\mu=\parentheses*{\mu_1,...,\mu_n}$: $\apr_{\mon}=\frac{1}{1+\ln\frac{1}{\mu_n}}$.
\end{theorem}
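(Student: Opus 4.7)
The plan is to replicate the threshold-game reduction of Theorem~\ref{thm:mon} with the objective replaced by the multiplicative ratio. By Fact~\ref{fact:optimal-threshold} every monotonic $u_r$ admits an optimal $x\parentheses*{u_r}$-threshold scheme; when $u_r\parentheses*{n,1}<0$ no scheme persuades, $u^{*}=0$, and the ratio is by convention $1$, so Adversary gains nothing. We may therefore restrict Adversary to utilities inducing $x\in\brackets*{0,1-\mu_n}$ and Sender to randomized $y$-threshold schemes with $y\in\brackets*{0,1-\mu_n}$. On this restricted domain the pure payoff of the ratio game is
\[
g^{\apr}\parentheses*{x,y}:=\frac{\parentheses*{1-y}\mathbf{1}_{y\geq x}}{1-x},
\]
because monotonicity of $u_r$ implies that the high signal of a $y$-threshold scheme is adopted precisely when $y\geq x$, yielding Sender-utility $1-y$ against $u^{*}=1-x$.

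I would then propose matching mixed strategies and verify the value is $v:=\frac{1}{1+\ln\parentheses*{1/\mu_n}}$. For Sender, take density $f\parentheses*{y}=\frac{v}{1-y}$ on $\brackets*{0,1-\mu_n}$ plus an atom of mass $v$ at $y=1-\mu_n$; the normalization $v\ln\parentheses*{1/\mu_n}+v=1$ holds by the choice of $v$, and for every pure $x\in\brackets*{0,1-\mu_n}$ the expected payoff equals $\frac{1}{1-x}\parentheses*{\int_x^{1-\mu_n}v\,dy+\mu_n v}=v$. Symmetrically, Adversary mixes with an atom of mass $v$ at $x=0$ and density $\frac{v}{1-x}$ on $\brackets*{0,1-\mu_n}$; an analogous short computation gives expected payoff $v$ against every $y\in\brackets*{0,1-\mu_n}$, and one verifies the bound is also $\leq v$ for the dominated deviation $y>1-\mu_n$. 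The game value is therefore $v$, and Sender's mixture immediately certifies $\apr_{\mon}\geq v$.

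The delicate direction is $\apr_{\mon}\leq v$: Adversary's threshold mixture must force ratio $\leq v$ against an \emph{arbitrary} (not necessarily threshold) Sender scheme. For this I would instantiate each threshold-$x$ choice by the concrete monotonic utility with $u_r\parentheses*{i,1}:=-1$ for $i<n$ and $u_r\parentheses*{n,1}:=\frac{1-x-\mu_n}{\mu_n}$. A direct check shows this $u_r$ has optimal threshold exactly $x$, and crucially its adoption region takes the one-dimensional form $A\parentheses*{u_r\parentheses*{x}}=\braces*{p:p_n\geq\mu_n/\parentheses*{1-x}}$; equivalently, for each posterior $p$ the set of thresholds that place $p$ into adoption is $\brackets*{0,1-\mu_n/p_n}$ when $p_n\geq\mu_n$ and empty otherwise. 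For any scheme $\pi$, Fubini yields
\[
\mathbb{E}_x\brackets*{\frac{u\parentheses*{\pi,u_r\parentheses*{x}}}{1-x}}=\mathbb{E}_s\brackets*{\frac{v\,p_n\parentheses*{s}}{\mu_n}\mathbf{1}_{p_n\parentheses*{s}\geq\mu_n}}\leq\frac{v}{\mu_n}\mathbb{E}_s\brackets*{p_n\parentheses*{s}}=v,
\]
where the inner identity uses the same calculus that produced $v$, and the last equality is Bayes-plausibility.

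The main obstacle is this last step: one must engineer the family $u_r\parentheses*{x}$ so that $A\parentheses*{u_r\parentheses*{x}}$ depends on a \emph{single} coordinate of the posterior, which is precisely what lets Bayes-plausibility of $p_n$ alone close the inequality exactly at $v$. A more symmetric family (e.g.\ $u_r\parentheses*{i,1}=-1$ for $i<j$ and $0$ for $i\geq j$) leads to an adoption region depending on the full support of $p$, and the resulting bound turns out to be slack; the chosen one-parameter family is tailored to the normalization $v\parentheses*{1+\ln\parentheses*{1/\mu_n}}=1$.
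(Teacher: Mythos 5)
Your proposal is correct and follows the same overall architecture as the paper: reduce to the two-player threshold game with payoff $\frac{(1-y)\mathbf{1}_{y\geq x}}{1-x}$ on $[0,1-\mu_n]^2$ via Fact~\ref{fact:optimal-threshold}, exhibit the mixed strategies (atom of mass $v$ at $x=0$ plus density $v/(1-x)$ for Adversary; density $v/(1-y)$ plus atom of mass $v$ at $y=1-\mu_n$ for Sender), and verify mutual indifference at value $v=\frac{1}{1+\ln(1/\mu_n)}$ --- this is exactly the content of the paper's Lemma~\ref{lem:threshold-game-apr}. The lower bound $\apr_{\mon}\geq v$ via Sender's randomized threshold scheme matches the paper step for step, and your family $u_r(i,1)=-1$ for $i<n$, $u_r(n,1)=\frac{1-x-\mu_n}{\mu_n}$ is (up to the harmless rescaling by $\mu_n$) precisely the paper's Adversary strategy.

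Where you genuinely depart is in how the hard direction $\apr_{\mon}\leq v$ is closed against non-threshold schemes. The paper merges states $1,\dots,n-1$ into a single state (exploiting that Adversary's utility is constant there), computes the per-posterior expected payoff $u'(q)$, argues via concavification that the best reply against Adversary's mixture is a threshold scheme, and only then invokes Lemma~\ref{lem:threshold-game-apr}. You instead note that the adoption region of $u_r(x)$ depends only on the coordinate $p_n$, apply Fubini to rewrite $\mathbb{E}_x\bigl[u(\pi,u_r(x))/(1-x)\bigr]=\mathbb{E}_s\bigl[\tfrac{v\,p_n(s)}{\mu_n}\mathbf{1}_{p_n(s)\geq\mu_n}\bigr]$, drop the indicator, and close via Bayes-plausibility $\mathbb{E}_s[p_n(s)]=\mu_n$. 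This is the same linear-upper-envelope idea that the concavification encodes, but carried out without the explicit binary-state reduction or the best-reply-is-a-threshold observation. Your version is somewhat shorter and makes it transparent why the bound lands exactly on $v$: the piecewise payoff $\tfrac{vp_n}{\mu_n}\mathbf{1}_{p_n\geq\mu_n}$ is dominated by a linear function of $p_n$ that is tight at $p_n=1$ and whose average over any Bayes-plausible signal distribution is pinned at $v$. Both routes are sound; the paper's highlights the structural fact that threshold schemes are best replies, while yours is a cleaner direct computation.
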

This result indicates that for large state spaces (under the natural assumption that $\mu_n \to 0$), a constant fraction of the optimal utility \emph{cannot} be guaranteed. Namely, unlike the regret approach in which Sender can ensure a regret bounded away from $1$, Sender cannot guarantee a constant approximation ratio. Now we describe the proof idea for Theorem~\ref{thm:apr}; for the full proof, see Appendix~\ref{ap:apr}.
\begin{proof}[Proof idea of Theorem~\ref{thm:apr}]
Similar arguments to those in Theorem~\ref{thm:mon} proof apply. Recall that the proof of Theorem~\ref{thm:mon} simplifies the complex zero-sum game $G$ with the action sets of mixtures over Receiver's utility for Adversary and signaling schemes for Sender. The key observation is that one can reduce $G$ to a much simpler zero-sum game in which both players choose thresholds (i.e., real numbers in $\brackets*{0,1-\mu_n}$).

The same idea works for Theorem~\ref{thm:apr}; the only difference is in the utility function in the simplified game, which should be $\frac{\parentheses*{1-y}\textbf{1}_{y\geq x}}{1-x}$. In this game, $x$ is the action of Adversary (the optimal threshold choice, as described in Fact~\ref{fact:optimal-threshold}), while $y$ is the threshold specifying Sender's strategy. Note that $u^*=1-x$, while Sender's utility in the $y$-threshold scheme is $\parentheses*{1-y}\textbf{1}_{y\geq x}$.%
\footnote{In this zero-sum game, the $y$-player is the maximizer, while the $x$-player is the minimizer.} 
The reduction proof is analogous to Theorem~\ref{thm:mon}; all that remains is to analyze the continuum-action-set zero-sum game, which is done in Lemma~\ref{lem:threshold-game-apr} in Appendix~\ref{ap:apr}.
\end{proof}
\section{Conclusions and Future Work}
\label{sec:future}
In this paper, we adopt the adversarial regret minimization approach to Bayesian persuasion, and prove that while in the most general case the regret approaches $1$ as the number of states grows large, assuming that Sender knows Receiver's ordinal preferences upon adoption ensures that the regret is always at most $\frac{1}{e}$. We further provide an explicit formula for the regret, and describe the optimal Sender's and Adversary's strategies. We also study multidimensional monotonic utilities and upper bound the regret in this setting for product priors, while showing that for general priors the regret might tend to $1$.

Some problems remain open for the setting that we study. In particular, exact characterization of the value (and the regret-minimizing schemes) in the arbitrary utility case for more than two states (the only exception for which we succeeded to perform such an analysis is the case of uniform prior with ternary state space). The precise asymptotic convergence rate of $\reg_{\arb}$ to $1$ is also unknown to us -- we only bound it between $\theta\parentheses*{\frac{1}{\sqrt{n}}}$ and $\theta\parentheses*{\frac{1}{n^2}}$.

There are also quite a few natural extensions of the model: considering Receiver with more than two actions and considering a setting with multiple Receivers.

Finally, the minimax robust approach has not been discussed in this paper. Recall that in the minimax approach, Sender aims to maximize her expected utility over the signaling scheme and the unknown Adversary's strategy -- that is, there is no benchmark for the performance of the signaling scheme. Unfortunately, in the absence of any Sender's knowledge about Receiver's utility, the minimax approach is meaningless: Adversary can choose $u_r$ s.t.~Receiver never adopts, yielding the minimax value of $0$. An interesting question we leave for future research is to find a well-motivated Sender's \emph{partial knowledge} about $u_r$, which would make the minimax approach meaningful. Similarly, it is interesting to understand whether extra assumptions on Receiver's utility, rather than just monotonicity, may yield a constant adversarial approximation ratio.
\bibliographystyle{plainnat}
\bibliography{persuasion-bib}
\appendix
\section{Proof of Proposition~\ref{pro:arbitrary-ternary}}
\label{ap:arbitrary-ternary}
\begin{proof}[Proof of Proposition~\ref{pro:arbitrary-ternary}]
By Remark~\ref{rem:arb}, it is enough to show that $\reg_{\arb}\parentheses*{\pi}\leq\frac{1}{2}$. Consider $\pi$ supported on the boundary of $\Delta\parentheses*{\Omega}$ that assigns to every line segment on the boundary probability proportional to the angle upon which this segment is seen from $\mu$. We shall prove, using analytical geometry, that for this $\pi$, $\reg_{\arb}\parentheses*{\pi}=\frac{1}{2}$. Let $\at:\parhalf*{-\infty,\infty}\to \brackhalf*{0,\pi}$ be defined as $\at\parentheses*{x}:=\arctan\parentheses*{x}+\pi$ for $x\in\parentheses*{-\infty,0}$, $\at\parentheses*{x}:=\arctan\parentheses*{x}$ for $x\in\brackhalf*{0,\infty}$ and $\at\parentheses*{\infty}:=\frac{\pi}{2}$.

Fix $A$, and let $l$ be the line (single-dimensional hyperplane) separating $A$ from $\Delta\parentheses*{\Omega}\setminus A$. Note that $\Delta\parentheses*{\Omega}$ is contained in a two-dimensional plane. Represent the elements of $\Delta\parentheses*{\Omega}$ by Cartesian coordinates (rather than the probabilities they assign to elements of $\Omega$). Parameterize its extreme points by $\omega_1=\parentheses*{0,0}$, $\omega_2=\parentheses*{1,0}$, $\omega_3=\parentheses*{\frac{1}{2},\frac{\sqrt{3}}{2}}$ .\footnote{We assume that the sides of the triangle $\Delta\parentheses*{\Omega}$ are of length $1$ when in fact they are of length $\sqrt{2}$; however, homothety (scratch) on $\Delta\parentheses*{\Omega}$ does not affect the regret.} Then $\mu=\parentheses*{\frac{1}{2},\frac{\sqrt{3}}{6}}$. If the separating line $l$ is disjoint to $\Delta\parentheses*{\Omega}$ -- the regret is $0$. Therefore, assume that it cuts at least two line segments from the boundary of $\Delta\parentheses*{\Omega}$ (possibly at the same point). Assume w.l.o.g.~that $l$ cuts $\omega_1\omega_2$ (the line connecting $\omega_1$ and $\omega_2$) and $\omega_1\omega_3$. Then $l$ contains the points $D:=\parentheses*{d,\sqrt{3}d}$ and $E:=\parentheses*{e,0}$ for some $0\leq d\leq \frac{1}{2}$ and $0\leq e\leq 1$. Assume that $\mu\notin A$ -- otherwise, the adoption probability assured by $\pi$ is at least $\frac{1}{2}$ and the regret is at most $\frac{1}{2}$.
\begin{figure}[H]
    \centering
    \caption{Parameterization of the points.}
    \label{fig:param}
    
    \begin{tikzpicture}[scale=4]
        \draw (0,0)--(1,0)--(0.5,0.86)--(0,0);
        \filldraw[gray] (0,0)--(0.6,0)--(0.2,0.344);
        \node at (0.2,0.15) {$A$};
        \node[below] at (0,0) {$\omega_1=(0,0)$};
        \node[below] at (1,0) {$\omega_2=(1,0)$};
        \node[above] at (0.5,0.86) {$\omega_3=(\frac{1}{2},\frac{\sqrt{3}}{2})$};
        \filldraw (0.6,0) circle(0.01);
        \node[below] at (0.6,0) {$E=(e,0)$}; 
        \filldraw (0.2,0.344) circle(0.01);
        \node[left] at (0.2,0.344) {$D=(d,\sqrt{3}d)$};
        \filldraw (0.5,0.29) circle(0.01);
        \node[above] at (0.5,0.29) {$\mu=(\frac{1}{2},\frac{\sqrt{3}}{6})$};
    \end{tikzpicture}
\end{figure}
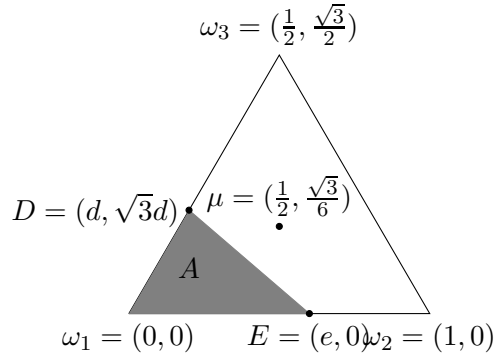
Consider now $\pi^*$, an optimal signaling scheme for a Sender who knows $A$. By pooling together all the probability mass in $A$ and all the probability mass in $\Delta\parentheses*{\Omega}\setminus A$ (separately), assume w.l.o.g.~that $\absolute*{\supp\parentheses*{\pi^*}}=2$.\footnote{$\absolute*{\supp\parentheses*{\pi^*}}>1$ since we assume that $\mu\notin A$.}

Let $s$ be the line segment connecting the two support elements of $\pi^*$. By Bayes-plausibility, it must pass through $\mu$. As $\pi^*$ is optimal, one can assume w.l.o.g.~that one support element is on $l$, while the other is on the boundary of $\Delta\parentheses*{\Omega}$ (outside $A$); furthermore, $\mu$ should divide $s$ in the maximal possible ratio. A straightforward (and a well-known) trigonometric exercise shows that the ratio in which a variable line segment $s$ passing through a fixed point, with endpoints on two fixed line segments $s_1,s_2$, is divided, is maximized when one of the endpoints of $s$ coincides with one of the endpoints of $s_1,s_2$. Therefore, one of the endpoints of $s$ is either an intersection point of $l$ with the boundary of $\Delta\parentheses*{\Omega}$, or a vertex of $\Delta\parentheses*{\Omega}$.

Assume first that exactly one extreme point of $\Delta\parentheses*{\Omega}$ is in $A$, and assume w.l.o.g.~that $\omega_1\in A$ (that is, $A$ is set by $l$ as in Figure~\ref{fig:param}). Qualitatively, we have $2$ different cases to check: either $E\in \supp\parentheses*{\pi^*}$ or $\omega_2\in \supp\parentheses*{\pi^*}$. Denote the second endpoint of $s$ (that is, the support element of $\supp\parentheses*{\pi^*}$ different from $E,\omega_2$) by $F$. Note that if $E\in \supp\parentheses*{\pi^*}$, then either $F\in\omega_2\omega_3$ or $F\in\omega_1\omega_3$.\footnote{$F$ cannot be on $\omega_1\omega_2$, as $s$ passes through $\mu$.}

Consider first the case $E=\parentheses*{e,0}\in \supp\parentheses*{\pi^*}$ and $F\in\omega_2\omega_3$ (see Figure~\ref{fig:case1}). That is, $F=\parentheses*{f,\sqrt{3}\parentheses*{1-f}}$, for $f=\frac{3-5e}{4-6e}$; note that as $F\in\omega_2\omega_3$ and $\mu\notin A$, we have $0\leq e\leq \frac{1}{2}$. The regret of $\pi$ for our fixed $A$ is:
\begin{align*}
    &\frac{\absolute*{\mu F}}{\absolute*{EF}}-\frac{\sphericalangle D\mu E}{2\pi}=\frac{1}{3-3e}-\frac{1}{2\pi}\at\parentheses*{\frac{\frac{\sqrt{3}/6}{1/2-e}-\frac{\sqrt{3}/6-\sqrt{3}d}{1/2-d}}{1+\frac{\sqrt{3}/6}{1/2-e}\cdot\frac{\sqrt{3}/6-\sqrt{3}d}{1/2-d}}}.
\end{align*}
\begin{figure}[H]
    \centering
    \caption{An illustration of the case $E\in \supp\parentheses*{\pi^*}$, $F\in\omega_2\omega_3$.}
    \label{fig:case1}
    
    \begin{tikzpicture}[scale=4]
        \draw (0,0)--(1,0)--(0.5,0.86)--(0,0);
        \filldraw[gray] (0,0)--(0.4,0)--(0.2,0.344);
        \node at (0.2,0.15) {$A$};
        \node[below] at (0,0) {$\omega_1$};
        \node[below] at (1,0) {$\omega_2$};
        \node[above] at (0.5,0.86) {$\omega_3$};
        \filldraw (0.4,0) circle(0.01);
        \node[below] at (0.4,0) {$E$}; 
        \filldraw (0.2,0.344) circle(0.01);
        \node[left] at (0.2,0.344) {$D$};
        \filldraw (0.495,0.29) circle(0.01);
        \node[right] at (0.495,0.29) {$\mu$};
        
        \draw (0.4,0)--(0.617,0.66);
        \filldraw (0.617,0.66) circle(0.01);
        \node[right] at (0.617,0.66) {$F=\parentheses*{f,\sqrt{3}\parentheses*{1-f}}$};
        
        \draw [decorate,decoration={brace, amplitude=10pt},xshift=-1,yshift=0]
(0.4,0)--(0.617,0.66) node [black,midway,xshift=-15, yshift=5] 
{$s$};
    \end{tikzpicture}
\end{figure}
For a fixed $e$, the above expression is maximal for $d=0$.\footnote{One can obtain this step geometrically: setting $D=\omega_1$ does not affect the first term and does not increase the subtracted term.} Therefore (upon algebraic simplifications), it is enough to check that:
\begin{align*}
&\frac{1}{3-3e}-\frac{1}{2\pi}\at\parentheses*{\sqrt{3}\frac{e}{2-3e}}\leq\frac{1}{2}\iff\at\parentheses*{\sqrt{3}\frac{e}{2-3e}}-\frac{3e-1}{3-3e}\pi\geq 0.
\end{align*}
Note that the above expression is continuously differentiable for $e\in\brackets*{0,\frac{1}{2}}$, and its derivative is:\footnote{For $e\in\brackets*{0,\frac{1}{2}}$, the angle $\sphericalangle D\mu E=\sphericalangle \omega_1\mu E$ is acute; thus, $\at\parentheses*{\sqrt{3}\frac{e}{2-3e}}=\arctan\parentheses*{\sqrt{3}\frac{e}{2-3e}}$.}
\begin{align*}
    &\sqrt{3}\frac{1\cdot(2-3e)-e\cdot(-3)}{(2-3e)^2}\cdot\frac{1}{1+\parentheses*{\sqrt{3}\frac{e}{2-3e}}^2}-\frac{3(3-3e)-(3e-1)\cdot (-3)}{(3-3e)^2}\pi=\\
    &\frac{2\sqrt{3}}{3e^2+(2-3e)^2}-\frac{6\pi}{(3-3e)^2}< 0,
\end{align*}
where the last transition follows from: $2\sqrt{3}(3-3e)^2-6\pi\parentheses*{3e^2+(2-3e)^2}=(18\sqrt{3}-72\pi)e^2+e(72\pi-36\sqrt{3})+(18\sqrt{3}-24\pi)=(18\sqrt{3}-24\pi)(1-e)^2+24\pi e(1-2e)\leq_{e_{\opt}=\frac{2\pi-\sqrt{3}}{4\pi-\sqrt{3}}} \frac{4\pi^2(18\sqrt{3}-24\pi)}{(4\pi-\sqrt{3})^2}+24\pi\frac{2\pi\sqrt{3}-3}{(4\pi-\sqrt{3})^2}<0$.

Thus:
\begin{align*}
    &\at\parentheses*{\sqrt{3}\frac{e}{2-3e}}-\frac{3e-1}{3-3e}\pi\geq_{e_{\opt}=\frac{1}{2}}\at\parentheses*{\sqrt{3}\frac{0.5}{2-1.5}}-\frac{1.5-1}{3-1.5}\pi=\at\parentheses*{\sqrt{3}}-\frac{\pi}{3}=0,
\end{align*}
as needed.

Now, let us still assume that $E=\parentheses*{e,0}\in\supp\parentheses*{\pi^*}$, but $F\in\omega_1\omega_3$ (see Figure~\ref{fig:case2}). That is, $F=\parentheses*{f,\sqrt{3}\cdot f}$ for $f=\frac{e}{6e-2}$. Note that $\frac{1}{2}\leq e\leq 1$ (otherwise, $F$ is outside the line segment $\omega_1\omega_3$, a contradiction). Just as in the previous case, we obtain that for a fixed $e$, the regret is maximized when $d=0$. The regret equals for $d=0$ (here we assume that $e\neq 1$; for $e=1\implies E=\omega_2$, the adoption probability according to $\pi^*$ is $\frac{1}{3}$; thus, the regret is smaller than $\frac{1}{2}$):
\begin{align*}
    &\frac{\absolute*{\mu F}}{\absolute*{EF}}-\frac{\sphericalangle D\mu E}{2\pi}=\frac{1}{3e}-\frac{1}{2\pi}\at\parentheses*{\frac{\frac{\sqrt{3}/6}{1/2-e}-\frac{\sqrt{3}/6-\sqrt{3}d}{1/2-d}}{1+\frac{\sqrt{3}/6}{1/2-e}\cdot\frac{\sqrt{3}/6-\sqrt{3}d}{1/2-d}}}=_{d=0}\frac{1}{3e}-\frac{1}{2\pi}\at\parentheses*{\sqrt{3}\frac{e}{2-3e}}.
\end{align*}
To prove that this expression is at most $\frac{1}{2}$, it is enough to show that:
\begin{align*}
    &\at\parentheses*{\sqrt{3}\frac{e}{2-3e}}-\frac{2-3e}{3e}\pi\geq 0.
\end{align*}
\begin{figure}[H]
    \centering
    \caption{An illustration of the case $E\in\supp\parentheses*{\pi^*}$, $F\in\omega_1\omega_3$.}
    \label{fig:case2}
    \begin{tikzpicture}[scale=4]
        \draw (0,0)--(1,0)--(0.5,0.86)--(0,0);
        \filldraw[gray] (0,0)--(0.6,0)--(0.2,0.344);
        \node at (0.2,0.15) {$A$};
        \node[below] at (0,0) {$\omega_1$};
        \node[below] at (1,0) {$\omega_2$};
        \node[above] at (0.5,0.86) {$\omega_3$};
        \filldraw (0.6,0) circle(0.01);
        \node[below] at (0.6,0) {$E$}; 
        \filldraw (0.2,0.344) circle(0.01);
        \node[left] at (0.2,0.344) {$D$};
        \filldraw (0.505,0.29) circle(0.01);
        \node[right] at (0.5,0.29) {$\mu$};
        
        \draw (0.6,0)--(0.383,0.66);
        \filldraw (0.383,0.66) circle(0.01);
        \node[left] at (0.383,0.66) {$F=\parentheses*{f,\sqrt{3}f}$};
        
        \draw [decorate,decoration={brace,mirror, amplitude=10pt},xshift=1,yshift=0.5]
(0.6,0)--(0.383,0.66) node [black,midway,xshift=13, yshift=3] 
{$s$};
    \end{tikzpicture}
\end{figure}
The last expression is increasing in $e$. Therefore, the minimum is attained for $e=\frac{1}{2}$, for which the condition holds:
\begin{align*}
    &\at\parentheses*{\sqrt{3}\frac{0.5}{2-1.5}}-\frac{2-1.5}{1.5}\pi=\at\parentheses*{\sqrt{3}}-\frac{\pi}{3}= 0.
\end{align*}
Now we should consider the case in which $E\notin \supp\parentheses*{\pi^*}$ (see Figure~\ref{fig:case3}). Then $\omega_2=\parentheses*{1,0}\in\supp\parentheses*{\pi^*}$ and $E\neq\omega_2$ (i.e., $e\neq 1$). As $E\in l$, $\omega_1\in A$ and $l$ separates $A$ from $\Delta\parentheses*{\Omega}\setminus A$, we must have $\omega_2\notin A$. Since the second endpoint of $s$, $F$, must belong to $A$ -- it must be on $l$. That is, $F=\parentheses*{f,\frac{1-f}{\sqrt{3}}}$ for $f=\frac{e-d-3de}{e-4d}$. Note that one must have $\frac{1}{4}\leq d\leq\frac{1}{2}$ (otherwise, $F$ cannot be on the segment $DE$) and $0\leq e\leq \frac{2d}{6d-1}$ (as $f\leq\frac{1}{2}$, since $\mu\notin A$). The regret is:
\begin{align*}
    &\frac{\absolute*{\mu\omega_2}}{\absolute*{F\omega_2}}-\frac{\sphericalangle D\mu E}{2\pi}.
\end{align*}
\begin{figure}[H]
    \centering
    \caption{An illustration of the case $E\notin \supp\parentheses*{\pi^*}$.}
    \label{fig:case3}
    \begin{tikzpicture}[scale=4]
        \draw (0,0)--(1,0)--(0.83,0.3);
        \draw (0.71,0.5)--(0.5,0.86)--(0,0);
        \filldraw[gray] (0,0)--(0.5,0)--(0.35,0.6);
        \node at (0.25,0.2) {$A$};
        \node[below] at (0,0) {$\omega_1$};
        \node[below] at (1,0) {$\omega_2$};
        \node[above] at (0.5,0.86) {$\omega_3$};
        \filldraw (0.5,0) circle(0.01);
        \node[below] at (0.5,0) {$E$}; 
        \filldraw (0.35,0.6) circle(0.01);
        \node[left] at (0.35,0.6) {$D$};
        \filldraw (0.5,0.29) circle(0.01);
        \node[right] at (0.5,0.29) {$\mu$};
        
        \draw (1,0)--(0.415,0.343);
        \filldraw (0.415,0.343) circle(0.01);
        \node at (0.65,0.4) {$F=(f,\frac{1-f}{\sqrt{3}})$};
        
        \draw [decorate,decoration={brace, amplitude=10pt},xshift=0,yshift=0]
(1,0)--(0.415,0.343) node [black,midway,xshift=-8, yshift=-10] 
{$s$};
    \end{tikzpicture}
\end{figure}
Fix $F$ and let $D,E$ vary so that $DE$ passes through $F$, while keeping $\frac{1}{4}\leq d\leq\frac{1}{2}$ and $0\leq e\leq \frac{2d}{6d-1}$. Consider the regret as a function of $d$. Note that $e=\frac{d\parentheses*{4f-1}}{f+3d-1}$. The regret equals:

\begin{align*}
    &\frac{\absolute*{\mu\omega_2}}{\absolute*{F\omega_2}}-\frac{\sphericalangle D\mu E}{2\pi}=\frac{1}{2\parentheses*{1-f}}-\frac{1}{2\pi}\at\parentheses*{\frac{\frac{\sqrt{3}/6}{1/2-e}-\frac{\sqrt{3}/6-\sqrt{3}d}{1/2-d}}{1+\frac{\sqrt{3}/6}{1/2-e}\cdot\frac{\sqrt{3}/6-\sqrt{3}d}{1/2-d}}}=\\
    &\frac{1}{2\parentheses*{1-f}}-\frac{1}{2\pi}\at\parentheses*{\sqrt{3}\cdot\frac{d(e-1/3)-e/6}{d(1-e)+e/2-1/3}}=\\
    &\frac{1}{2\parentheses*{1-f}}-\frac{1}{2\pi}\at\brackets*{\sqrt{3}\cdot\frac{d\parentheses*{d(4f-1)-\frac{f+3d-1}{3}}-\frac{d(4f-1)}{6}}{d\parentheses*{f+3d-1-d(4f-1)}+\frac{d(4f-1)}{2}-\frac{f+3d-1}{3}}}=\\
    &\frac{1}{2\parentheses*{1-f}}-\frac{1}{2\pi}\at\brackets*{2\sqrt{3}\parentheses*{2f-1}\frac{d^2-\frac{d}{4}}{d^2\parentheses*{4(1-f)}+d\parentheses*{3f-\frac{5}{2}}+\frac{1-f}{3}}}.
\end{align*}
The regret is continuous as a function of $d$; its derivative w.r.t.~$d$ has the same sign as (apart from finitely many values of $d$ for which the derivative of the regret is not defined):
\begin{align*}
    &-\parentheses*{2d-\frac{1}{4}}\brackets*{d^2\parentheses*{4(1-f)}+d\parentheses*{3f-\frac{5}{2}}+\frac{1-f}{3}}+\parentheses*{d^2-\frac{d}{4}}\brackets*{8d\parentheses*{1-f}+\parentheses*{3f-\frac{5}{2}}}=\\
    &d^2\parentheses*{\frac{3}{2}-2f}-d\parentheses*{\frac{2}{3}\parentheses*{1-f}}+\frac{1-f}{12}.
\end{align*}
Recall that $\frac{1}{4}\leq d\leq\frac{1}{2}$ and $\frac{1}{4}\leq f\leq\frac{1}{2}$ (because $\omega_2 F$ contains $\mu$ and $F$ is inside $\Delta\parentheses*{\Omega}$). The above function is decreasing for $\frac{1}{4}\leq d\leq\frac{2\parentheses*{1-f}}{3\parentheses*{3-4f}}$ and is increasing for $\frac{2\parentheses*{1-f}}{3\parentheses*{3-4f}}\leq d\leq \frac{1}{2}$.\footnote{Note that $\frac{1}{4}\leq \frac{2\parentheses*{1-f}}{3\parentheses*{3-4f}}\leq\frac{1}{2}$ for $\frac{1}{4}\leq f\leq\frac{1}{2}$.} Furthermore, its value at $d=\frac{1}{4}$ is:
\begin{align*}
    &\frac{1}{16}\parentheses*{\frac{3}{2}-2f}-\frac{1}{4}\cdot\frac{2}{3}\parentheses*{1-f}+\frac{1-f}{12}=\frac{1-4f}{96}\leq 0.
\end{align*}
Therefore, for a fixed $f$, the regret is maximized for a boundary value of $d$: either for $d=\frac{1}{4}$ or for $d=\frac{1}{2}$. If $d=\frac{1}{4}$, then necessarily $F=D$ and the regret is: $$\frac{\absolute*{\mu\omega_2}}{\absolute*{D\omega_2}}-\frac{\sphericalangle D\mu E}{2\pi}=\frac{2}{3}-\frac{\sphericalangle D\mu E}{2\pi}\leq\frac{2}{3}-\frac{\pi/3}{2\pi}=\frac{1}{2},$$
as needed. If $d=\frac{1}{2}$, then the value of the regret (after algebraic simplifications) is:
\begin{align*}
    &\frac{1}{2\parentheses*{1-f}}-\frac{1}{2\pi}\at\parentheses*{\frac{3\sqrt{3}\parentheses*{2f-1}}{2f+1}}
\end{align*}
The derivative of this expression is (note that it is well-defined for every $\frac{1}{4}\leq f\leq\frac{1}{2}$):
\begin{align*}
    &\frac{1}{2\parentheses*{1-f}^2}-\frac{3\sqrt{3}}{2\pi}\cdot\frac{1}{1+\parentheses*{\frac{3\sqrt{3}\parentheses*{2f-1}}{2f+1}}^2}\cdot\frac{2\parentheses*{2f+1}-2\parentheses*{2f-1}}{\parentheses*{2f+1}^2}=\\
    &\frac{1}{2\parentheses*{1-f}^2}-\frac{6\sqrt{3}}{\pi\parentheses*{\parentheses*{2f+1}^2+27\parentheses*{2f-1}^2}}=\frac{1}{2\parentheses*{1-f}^2}-\frac{3\sqrt{3}}{2\pi\parentheses*{28f^2-26f+7}}=\\
    &\frac{\parentheses*{28\pi-3\sqrt{3}}f^2-\parentheses*{26\pi-6\sqrt{3}}f+\parentheses*{7\pi-3\sqrt{3}}}{2\pi\parentheses*{1-f}^2\parentheses*{28f^2-26f+7}}>0,
\end{align*}
where the last transition holds as both the numerator and the denominator are positive for every $f\in\mathbb{R}$. Thus, given that $d=\frac{1}{2}$, the regret is maximal for $f=\frac{1}{2}$, for which the regret is exactly $\frac{1}{2}$, as desired.

To finish the proof, it remains to consider the case in which $A$ contains two different extreme points of $\Delta\parentheses*{\Omega}$.\footnote{If $A$ contains no extreme points or all the extreme points of $\Delta\parentheses*{\Omega}$ -- the regret is $0$.} If $s$ intersects $l$ at a boundary point of $\Delta\parentheses*{\Omega}$, then Adversary can increase the regret by rotating $l$ around that point in a way which decreases the area of $A$; this way $u^*$ cannot decrease, while $u\parentheses*{\pi,u_r}$ decreases. Therefore, it is enough to consider the case in which $s$ contains the unique extreme point of $\Delta\parentheses*{\Omega}$ that is not in $A$. Assume w.l.o.g.~that $\omega_2,\omega_3\in A$, while $\omega_1\notin A$ is an endpoint of $s$. Then the second element of $\supp\parentheses*{\pi}$, which belongs to $l$, is of the form $F=\parentheses*{f,\frac{f}{\sqrt{3}}}$ for $f=\frac{3de}{2d+e}$ (see Figure~\ref{fig:case4}). We must have $\frac{1}{2}\leq f\leq\frac{3}{4}$, as the segment $\omega_1 F$ contains $\mu$ and $F\in\Delta\parentheses*{\Omega}$. The regret is:
\begin{align*}
    &\frac{\absolute*{\mu \omega_1}}{\absolute*{F\omega_1}}-\frac{\sphericalangle D\mu E}{2\pi}=\frac{2d+e}{6de}-\frac{1}{2\pi}\at\parentheses*{\frac{\frac{\sqrt{3}/6}{1/2-e}-\frac{\sqrt{3}/6-\sqrt{3}d}{1/2-d}}{1+\frac{\sqrt{3}/6}{1/2-e}\cdot\frac{\sqrt{3}/6-\sqrt{3}d}{1/2-d}}}=\frac{2d+e}{6de}-\frac{1}{2\pi}\at\parentheses*{\sqrt{3}\cdot\frac{d(e-1/3)-e/6}{d(1-e)+e/2-1/3}}.
\end{align*}
\begin{figure}[H]
    \centering
    \caption{An illustration of the case $\omega_2,\omega_3\in A$.}
    \label{fig:case4}
    
    \begin{tikzpicture}[scale=4]
        \draw (0,0)--(1,0)--(0.83,0.3);
        \draw (0.71,0.5)--(0.5,0.86)--(0,0);
        \filldraw[gray] (0.8,0)--(0.42,0.72)--(0.5,0.86)--(1,0);
        \node at (0.6,0.55) {$A$};
        \node[below] at (0,0) {$\omega_1$};
        \node[below] at (1,0) {$\omega_2$};
        \node[above] at (0.5,0.87) {$\omega_3$};
        \filldraw (0.8,0) circle(0.01);
        \node[below] at (0.8,0) {$E$}; 
        \filldraw (0.42,0.72) circle(0.01);
        \node[left] at (0.42,0.72) {$D$};
        \filldraw (0.5,0.29) circle(0.01);
        \node[right] at (0.5,0.29) {$\mu$};
        
        \draw (0,0)--(0.62,0.36);
        \filldraw (0.62,0.36) circle(0.01);
        \node[right] at (0.65,0.4) {$F=\parentheses*{f,\frac{f}{\sqrt{3}}}$};
        
        \draw [decorate,decoration={brace, amplitude=10pt},xshift=0,yshift=0]
(0,0)--(0.63,0.36) node [black,midway,xshift=6, yshift=15] 
{$s$};
    \end{tikzpicture}
\end{figure}
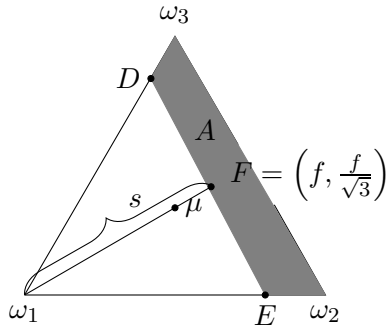
Set $x:=de$ and $y:=2d+e$. Then the above expression equals (note that $y=\frac{3x}{f}$):
\begin{align*}
    &\frac{y}{6x}-\frac{1}{2\pi}\at\parentheses*{\sqrt{3}\cdot\frac{x-y/6}{y/2-x-1/3}}=\frac{1}{2f}-\frac{1}{2\pi}\at\parentheses*{\sqrt{3}\cdot\frac{x-x/(2f)}{(3x)/(2f)-x-1/3}}=\\
    &\frac{1}{2f}-\frac{1}{2\pi}\at\parentheses*{\sqrt{3}\cdot\frac{x\parentheses*{2f-1}}{x\parentheses*{3-2f}-2f/3}}.
\end{align*}
Note that the angle $\sphericalangle D\mu E$ is obtuse for all feasible choices of $x,f$; thus, for a fixed $f$, the above expression is continuously differentiable w.r.t.~$x$. The derivative of the expression inside $\at\parentheses*{\cdot}$ w.r.t.~$x$ is:
\begin{align*}
    &\sqrt{3}\frac{\parentheses*{2f-1}\cdot\parentheses*{-2f/3}}{\parentheses*{x\parentheses*{3-2f}-2f/3}^2}<0,
\end{align*}
as $f>\frac{1}{2}$. Therefore, for a fixed $f$, the regret is maximal for the largest possible $x$. As $f=\frac{3x}{y}$, maximizing $x$ for a fixed $f$ is equivalent to maximizing $x$ for a fixed $y$. By the AM-GM inequality, $x\leq\frac{y^2}{8}$, with equality if and only if $2d=e$.\footnote{The geometric interpretation of this case is -- $DE$ is parallel to $\omega_3\omega_2$.} When $2d=e$, we have $f=\frac{3x}{y}=\frac{3y}{8}=\frac{3e}{4}$ and $x=\frac{e^2}{2}$; furthermore, we must have $\frac{1}{2}\leq e\leq 1$, as $\mu\notin A$. Assuming $2d=e$, the regret as a function of $e$ is:
\begin{align*}
    &\rho\parentheses*{e}:=\frac{2}{3e}-\frac{1}{2\pi}\at\parentheses*{\sqrt{3}\cdot\frac{e^2\parentheses*{3e/4-1/2}}{e^2\parentheses*{3/2-3e/4}-e/2}}=\frac{2}{3e}-\frac{1}{2\pi}\at\parentheses*{\sqrt{3}\cdot\frac{e\parentheses*{3e/4-1/2}}{e\parentheses*{3/2-3e/4}-1/2}}.
\end{align*}
We have:
\begin{align*}
    &\rho'\parentheses*{e}=-\frac{2}{3e^2}-\frac{\sqrt{3}}{2\pi}\frac{1}{1+3\parentheses*{\frac{e\parentheses*{3e/4-1/2}}{e\parentheses*{3/2-3e/4}-1/2}}^2}\cdot\\
    &\frac{\parentheses*{3e/2-1/2}\parentheses*{e\parentheses*{3/2-3e/4}-1/2}-e\parentheses*{3e/4-1/2}\parentheses*{3/2-3e/2}}{\parentheses*{e\parentheses*{3/2-3e/4}-1/2}^2}=\\
    &-\frac{2}{3e^2}-\frac{\sqrt{3}}{2\pi}\frac{3e^2-3e+1}{\parentheses*{3e^2/2-3e+1}^2+3\parentheses*{3e^2/2-e}^2}<0.
\end{align*}
Hence, the maximum of $\rho\parentheses*{e}$ on $\brackets*{\frac{1}{2},1}$ is obtained for $e=\frac{1}{2}$. Since $\rho\parentheses*{\frac{1}{2}}=\frac{1}{2}$, the regret is always at most $\frac{1}{2}$, which completes our proof.
\end{proof}
\section{Proof of Theorem~\ref{thm:arbitrary-asy}}
\label{ap:arb}
\begin{proof}[Proof of Theorem~\ref{thm:arbitrary-asy}]
We start by proving the lower bound. Assume $n\geq 16$ (otherwise the result follows from Remark~\ref{rem:arb}). Our proof constructs a difficult instance under which any algorithm suffers a regret of at least $1 - \frac{2}{\sqrt{n}}$. Fix $0<\delta<\frac{\min_{i\in\Omega} \mu_i}{2n}$. Set $\supp_{\delta}\parentheses*{p}:=\braces*{i\in\Omega:\Pr_{\omega'\sim p}\brackets*{\omega'=i}>\delta}$. Depending on what Receiver's utility a state leads to, we divide the $n$ states in our instance into three types: a single \emph{good} state, $n-\lfloor \sqrt{n} \rfloor-1$ \emph{normal} states  and $\lfloor \sqrt{n} \rfloor$ \emph{bad}  states. Sender's uncertainty about Receiver's utilities is captured by her uncertainty on the type of the true state of nature $\omega$.

Suppose that Adversary chooses a uniform permutation on $\Omega$; then she sets the first state in the permutation to be good, the next $n-\lfloor \sqrt{n} \rfloor-1$ states to be normal and the rest to be bad. Denote by $\mu_{\text{good}}$ the prior probability of the good state and by $\mu_{\text{normal}}$ -- the sum of the prior probabilities of the normal states. Suppose that for $i\in\Omega$, $u_r\parentheses*{i,1}$ equals $\frac{1}{\mu_{\text{good}}}$ if $i$ is good, $-\frac{1}{\mu_{\text{normal}}}$ if $i$ is normal and $-\frac{1}{\delta\cdot \mu_{\text{good}}}$ if $i$ is bad. The utility choices ensure that:
\begin{enumerate}
    \item The expected Receiver's utility knowing that the state is not bad is $0$.
    \item A posterior $p$ with a bad state in $\supp_{\delta}\parentheses*{p}$ leads to rejection.
    \item A posterior $p$ s.t.~the good state is not in $\supp_{\delta}\parentheses*{p}$ leads to rejection, as then $p$ must assign a probability of at least $\frac{1-\delta}{n-1}$ to a certain not good state, resulting in Receiver's utility being smaller than: \begin{align*}
        &\frac{\delta}{\mu_{\text{good}}}-\frac{1-\delta}{n-1}\cdot\frac{1}{\max\braces*{\delta\cdot\mu_{\text{good}}, \mu_{\text{normal}}}}=\frac{\delta}{\mu_{\text{good}}}-\frac{1-\delta}{\mu_{\text{normal}}\parentheses*{n-1}}<\\
 &\frac{1}{2n\mu_{\text{normal}}}-\frac{1}{\mu_{\text{normal}}\parentheses*{n-1}}+\frac{1}{2\mu_{\text{normal}}\parentheses*{n-1}}<0.
    \end{align*}
\end{enumerate}
Had Sender known the type of each state, she could have used a signaling scheme revealing whether the true state of nature $\omega$ is bad or not. This scheme would have made Receiver adopt exactly for not bad values of $\omega$. Thus, $u^*\geq 1-\frac{\lfloor\sqrt{n}\rfloor}{n}\geq 1-\frac{1}{\sqrt{n}}$.

We shall prove that our ignorant Sender cannot get a utility above $\frac{1}{\sqrt{n}}$; it would imply $\reg_{\arb}\geq \parentheses*{1-\frac{1}{\sqrt{n}}}-\frac{1}{\sqrt{n}}\geq 1-\frac{2}{\sqrt{n}}$, completing the proof. In fact, we shall show a stronger result: for every Receiver's posterior $p\in\Delta\parentheses*{\Omega}$, the probability of adoption, under Adversary's strategy, is at most~$\frac{1}{\sqrt{n}}$.

Indeed, fix some posterior $p\in\Delta\parentheses*{\Omega}$. We have the following cases:
\begin{itemize}
    \item If $\absolute*{\supp_{\delta}\parentheses*{p}}\leq \lfloor\sqrt{n}\rfloor$, 
    then the probability of having the good state in $\supp_{\delta}\parentheses*{p}$ is:\\$\frac{\absolute*{\supp_{\delta}\parentheses*{p}}}{n}\leq\frac{\sqrt{n}}{n}=\frac{1}{\sqrt{n}}$.
    \item Otherwise ($\absolute*{\supp_{\delta}\parentheses*{p}}> \lfloor\sqrt{n}\rfloor$), the probability of not having a bad state -- and meanwhile, having the good state -- in $\supp_{\delta}\parentheses*{p}$ is:
\begin{align*}
&\parentheses*{1-\frac{\lfloor\sqrt{n}\rfloor}{n}}\parentheses*{1-\frac{\lfloor\sqrt{n}\rfloor}{n-1}}...\parentheses*{1-\frac{\lfloor\sqrt{n}\rfloor}{n+1-\absolute*{\supp_{\delta}\parentheses*{p}}}}\cdot\min\parentheses*{\frac{\absolute*{\supp_{\delta}\parentheses*{p}}}{n-\lfloor\sqrt{n}\rfloor},1}\leq\\
&\parentheses*{1-\frac{\lfloor\sqrt{n}\rfloor}{n}}^{\absolute*{\supp_{\delta}\parentheses*{p}}}\cdot\frac{\absolute*{\supp_{\delta}\parentheses*{p}}}{n-\lfloor\sqrt{n}\rfloor}\leq\parentheses*{1-\frac{\lfloor\sqrt{n}\rfloor}{n}}^{\lfloor\sqrt{n}\rfloor+1}\cdot\frac{\lfloor\sqrt{n}\rfloor+1}{n-\lfloor\sqrt{n}\rfloor}\leq\\
&e^{-\frac{\lfloor\sqrt{n}\rfloor\parentheses*{\lfloor\sqrt{n}\rfloor+1}}{n}}\cdot\frac{\lfloor\sqrt{n}\rfloor+1}{n-\lfloor\sqrt{n}\rfloor}\leq\frac{1}{e^{1-1/\sqrt{n}}}\cdot\frac{\sqrt{n}+1}{n-\sqrt{n}} \leq_{n\geq 16}\frac{1}{\sqrt{n}},
\end{align*}
where the second inequality holds since $\parentheses*{1-\frac{\lfloor\sqrt{n}\rfloor}{n}}^x \cdot x$ is non-increasing as a function of the integer variable $x$ for $x\geq\lfloor\sqrt{n}\rfloor+1$, because $1-\frac{\lfloor\sqrt{n}\rfloor}{n}\leq\frac{x}{x+1}$ for $x\geq\lfloor\sqrt{n}\rfloor+1$.
\end{itemize}
Therefore, the expected adoption probability over Adversary's strategy at every posterior is at most $\frac{1}{\sqrt{n}}$, which completes the proof of the lower bound.

Let us prove now the upper bound. Note that the full revelation scheme \emph{fails} to provide any regret guarantee for priors that have tiny masses on some of the states. The regret guarantee of full revelation is as high as $1-\min_{i \in \brackets*{n}} \mu_i$.\footnote{Indeed, if for some $i\in\brackets*{n}$ we have $u_r\parentheses*{i,1}=\frac{1}{\mu_i}$ and $u_r\parentheses*{j,1}=-\frac{1}{1-\mu_i}$ for $j\neq i$, then the no-information scheme leads to adoption w.p.~1, while the full-revelation scheme leads to adoption w.p.~$\mu_i$; therefore, the regret is $1-\mu_i$.} We shall prove that a modification of the full-revelation scheme s.t.~every signal realization pools at most two states has a regret guarantee of $1-\frac{1}{4n^2}$ \emph{for every prior}.

Define $U:=\braces*{i\in\Omega:\mu_i\geq \frac{1}{2n}}$. Note that $\sum_{i\in U} \mu_i=1-\sum_{i\notin U} \mu_i\geq 1-\frac{n}{2n}=\frac{1}{2}$ (in particular, $U\neq\emptyset$).

Consider the signaling scheme $\pi$ with the set of signals $S=\braces*{s_{i}: i\in \Omega}\cup\braces*{s_{i,j}: i\in U, j\in\Omega\setminus U}$, s.t.:
\begin{itemize}
    \item For every $i\in U$, the signal $s_i$ is assigned a probability mass of $\parentheses*{1-\frac{1}{2n}}\mu_i$ out of the prior probability of $\mu_i$ for $\omega=i$.
    \item For every $i\in \Omega\setminus U$, the signal $s_i$ is assigned a probability mass of $\frac{\mu_i}{2}$ out of the prior probability for $\omega=i$.
    \item For every $i\in U$, $j\in \Omega\setminus U$, the signal $s_{i,j}$ is assigned a probability mass of $\frac{\mu_i}{2n\parentheses*{n-\absolute*{U}}}$ out of the prior probability for $\omega=i$ and a probability mass of $\frac{\mu_j}{2\absolute*{U}}$ out of the prior probability for $\omega=j$.
\end{itemize}
We claim that $\reg_{\arb}\parentheses*{\pi}\leq 1-\frac{1}{4n^2}$. Indeed, fix $u_r$.\footnote{The result for a mixture over possible functions $u_r$ would follow by taking the expectation over Adversary's strategy.} Define $T:= \braces*{i\in\Omega: u_r\parentheses*{i,1}\geq 0}$. Consider the following cases:
\begin{itemize}
    \item $T\cap U\neq\emptyset$. Since $s_i$ leads to adoption for every $i\in T$, in this case $u\parentheses*{\pi,u_r}\geq\frac{1}{2n}\cdot\parentheses*{1-\frac{1}{2n}}\geq \frac{1}{4n^2}$; thus, $\reg_{\arb}\parentheses*{\pi}=u^*\parentheses*{u_r}-u\parentheses*{\pi, u_r}\leq 1-\frac{1}{4n^2}$.
    \item $T\cap U=\emptyset$, and for every $i\in U$ there exists $j_i\in \Omega\setminus U$ s.t.~$\frac{\mu_i}{2n\parentheses*{n-\absolute*{U}}}\cdot u_r\parentheses*{i,1}+\frac{\mu_{j_i}}{2\absolute*{U}}\cdot u_r\parentheses*{j_i,1}\geq 0$. Then $s_{i,j_i}$ yields to adoption for every $i\in U$. We saw that $\sum_{i\in U} \mu_i\geq\frac{1}{2}$; thus, adoption occurs w.p.~at least $\sum_{i\in U} \frac{\mu_i}{2n\parentheses*{n-\absolute*{U}}} =\frac{\sum_{i\in U} \mu_i}{2n\parentheses*{n-\absolute*{U}}}\geq \frac{1}{4n^2}$, and again the regret is at most $1-\frac{1}{4n^2}$.
    \item $T\cap U=\emptyset$, and there exists $i\in U$ s.t.~for every $j\in \Omega\setminus U$: $\frac{\mu_i}{2n\parentheses*{n-\absolute*{U}}}\cdot u_r\parentheses*{i,1}+\frac{\mu_{j}}{2\absolute*{U}}\cdot u_r\parentheses*{j,1}<0$. Summing over $j\in \Omega\setminus U$ and multiplying by $2\absolute*{U}$ yields: $\frac{\absolute*{U}}{n}\cdot\mu_i\cdot u_r\parentheses*{i,1}+\sum_{j\in \Omega\setminus U}\mu_{j}\cdot u_r\parentheses*{j,1}<0$. If $U=\Omega$, then adoption never occurs for any signaling scheme and the regret is $0$. Otherwise, as $T\subseteq \Omega\setminus U$, we get that even upon the knowledge of $u_r$, a probability mass greater than $\parentheses*{1-\frac{\absolute*{U}}{n}}\cdot\mu_i\geq\frac{\mu_i}{n}\geq\frac{1}{2n^2}$ from the prior probability of $\mu_i$ for $\omega=i$ does not lead to adoption. Therefore, $u^*\leq 1-\frac{1}{2n^2}\leq 1-\frac{1}{4n^2}$, and $\reg_{\arb}\parentheses*{\pi}=u^*\parentheses*{u_r}-u\parentheses*{\pi, u_r}\leq 1-\frac{1}{4n^2}$.
\end{itemize}
In all the cases, $\reg_{\arb}\parentheses*{\pi}\leq 1-\frac{1}{4n^2}$, as desired.
\end{proof}

\section{Proof of Theorem~\ref{thm:mon}}
\label{ap:mon}
We start with the following lemma.
\begin{lemma}
\label{lem:threshold-game}
Let $\alpha\in \parentheses*{0,1}$ be a constant. Let $G_\alpha$ be a two-player zero-sum game with continuum action sets $X=Y=\brackets*{0,1-\alpha}$ of the $x$- (maximizing) and the $y$- (minimizing) players, respectively, and utility $$g\parentheses*{x,y}:=\parentheses*{1-x}-\parentheses*{1-y} \mathbf{1}_{y\geq x}.$$ Denote by $v=\val\parentheses*{G_\alpha}$ the value of $G_\alpha$.
\begin{itemize}
    \item For $\alpha\geq \frac{1}{e}$ we have $v=-\alpha \ln \alpha$. Furthermore, there exists an optimal strategy $o^*_x  \in \Delta\parentheses*{\brackets*{0,1-\alpha}}$ of the $x$-player that has an atom of weight $\alpha$ on $x=0$ and otherwise has the density function $f_X\parentheses*{x}:=\frac{\alpha}{\parentheses*{1-x}^2}$ over the entire segment $\brackets*{0,1-\alpha}$. Moreover, there exists an optimal strategy $ o^*_y  \in \Delta\parentheses*{\brackets*{0,1-\alpha}}$ of the $y$-player that has an atom of weight $1+\ln\alpha$ on $y=1-\alpha$ and otherwise has the density function $f_Y\parentheses*{y}:=\frac{1}{1-y}$ over the entire segment $\brackets*{0,1-\alpha}$.
    \item For $\alpha< \frac{1}{e}$ we have $v=\frac{1}{e}$. Furthermore, there exists an optimal strategy $ o^*_x  \in \Delta\parentheses*{\brackets*{0,1-\alpha}}$ of the $x$-player that has an atom of weight $\frac{1}{e}$ on $x=0$ and otherwise has the density function $f_X\parentheses*{x}:=\frac{1}{e\parentheses*{1-x}^2}$ over the segment $\brackets*{0,1-\frac{1}{e}}$. Moreover, there exists an optimal strategy $ o^*_y  \in \Delta\parentheses*{\brackets*{0,1-\alpha}}$ of the $y$-player that has the density function $f_Y\parentheses*{y}:=\frac{1}{1-y}$ over the segment $\brackets*{0,1-\frac{1}{e}}$.
\end{itemize}
\end{lemma}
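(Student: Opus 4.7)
My plan is to verify directly, via the indifference principle for continuous zero-sum games, that the proposed pair $(o_x^*, o_y^*)$ forms a Nash equilibrium in each of the two regimes. The payoff has the piecewise-linear form $g(x,y) = y-x$ when $y \geq x$ and $g(x,y) = 1-x$ otherwise, which will make the relevant integrals tractable. I will compute $E[g(X,y)]$ under $X \sim o_x^*$ and show it equals the claimed value $v$ for every $y$ in the support of $o_y^*$ (and $\geq v$ elsewhere in $[0,1-\alpha]$, so the minimizing $y$-player cannot profit from deviating), and symmetrically $E[g(x,Y)]$ under $Y \sim o_y^*$ equals $v$ on the support of $o_x^*$ (and $\leq v$ elsewhere). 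By weak duality this will simultaneously pin down $\val(G_\alpha)=v$ and certify $(o_x^*, o_y^*)$ as optimal.

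First I will check that $o_x^*$ and $o_y^*$ are genuine probability distributions. For $\alpha \geq 1/e$ the $x$-player's total mass is
\begin{equation*}
\alpha + \int_0^{1-\alpha} \frac{\alpha}{(1-x)^2}\, dx = \alpha + \alpha\left(\tfrac{1}{\alpha}-1\right) = 1,
\end{equation*}
while the $y$-player's mass is $(1+\ln\alpha) + \int_0^{1-\alpha}(1-y)^{-1}\, dy = (1+\ln\alpha) - \ln\alpha = 1$. The atom $1+\ln\alpha$ is non-negative iff $\alpha \geq 1/e$; this is exactly what forces the regime split. For $\alpha < 1/e$ the $y$-player's atom vanishes, its support contracts to $[0, 1-1/e]$ (so $-\ln(1/e)=1$ again normalizes it), and $o_x^*$ becomes the $\alpha = 1/e$ distribution, carried on $[0, 1-1/e] \subseteq [0, 1-\alpha]$.

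The bulk of the work will be the constancy computation. For $\alpha \geq 1/e$ and $y \in [0,1-\alpha]$, I split the expectation at $x=y$:
\begin{equation*}
E[g(X,y)] = \alpha\, g(0,y) + \int_0^y (y-x)\frac{\alpha}{(1-x)^2}\, dx + \int_y^{1-\alpha}(1-x)\frac{\alpha}{(1-x)^2}\, dx.
\end{equation*}
The substitution $u = 1-x$ reduces the middle integral to $\alpha(-y - \ln(1-y))$, the last one to $\alpha \ln\tfrac{1-y}{\alpha}$, and the atom term is $\alpha y$; these three telescope to $-\alpha \ln \alpha$, independent of $y$. The dual computation gives
\begin{equation*}
E[g(x,Y)] = (1-x) - \alpha(1+\ln\alpha) - \int_x^{1-\alpha} 1\, dy = -\alpha \ln \alpha,
\end{equation*}
also independent of $x \in [0,1-\alpha]$, which closes the $\alpha \geq 1/e$ case.

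The $\alpha < 1/e$ case will proceed along the same lines with $\alpha$ replaced by $1/e$ in the formulas, yielding $v = 1/e$. The main obstacle I anticipate here is endpoint bookkeeping: since the supports $[0, 1-1/e]$ are proper subintervals of the action sets, I must also verify that deviating into $(1-1/e,\, 1-\alpha]$ is strictly suboptimal for both players. For $y$ in this range the entire support of $o_x^*$ lies below $y$, so $E[g(X,y)] = y/e + \tfrac{1}{e}\int_0^{1-1/e}(y-x)(1-x)^{-2}\, dx$, which a short computation should show strictly exceeds $1/e$ — and the minimizing $y$-player thus avoids it. For $x$ in this range the event $\{Y \geq x\}$ has probability zero under $o_y^*$, so $E[g(x,Y)] = 1-x < 1/e$, which the maximizing $x$-player avoids. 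Once these inequalities and the two constancy identities are in place, weak duality completes the argument.
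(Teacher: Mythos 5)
Your proposal is correct and follows essentially the same approach as the paper's own proof: verify that both players are indifferent across the supports of $o_x^*$ and $o_y^*$ (the payoff is constantly $v$ there), check that deviating outside the support is unprofitable in the $\alpha<1/e$ case, and conclude by the standard guarantee/weak-duality argument for zero-sum games. Your handling of the $x$-player deviation to $(1-1/e,1-\alpha]$ in the second regime -- observing that $\{Y\geq x\}$ has probability zero so the payoff collapses to $1-x<1/e$ -- is in fact cleaner than the paper's corresponding display, whose integral bounds appear off; the rest of your computations match the paper's line for line.
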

\begin{proof}[Proof of Lemma~\ref{lem:threshold-game}]
Assume first that $\alpha\geq\frac{1}{e}$. We shall show that for the strategy couple $\parentheses*{o^*_x , o^*_y}$, both players are indifferent between all actions in $\brackets*{0,1-\alpha}$ and the payoff is $-\alpha\ln\alpha$; therefore, $\parentheses*{o^*_x , o^*_y}$ is an equilibrium and $v=-\alpha\ln\alpha$. Indeed, every fixed $x\in\brackets*{0,1-\alpha}$ yields expected (over $y$) payoff of:
\begin{align*}
    &\int_{0}^{1-\alpha} f_Y\parentheses*{y} g\parentheses*{x,y} dy + \Pr\brackets*{y=1-\alpha} g\parentheses*{x,1-\alpha} = \int_{0}^{1-\alpha} \frac{1}{1-y} \cdot \parentheses*{1-x} dy -\\
    &\int_{x}^{1-\alpha} \frac{1}{1-y} \cdot \parentheses*{1-y} dy +\parentheses*{1+\ln\alpha} \cdot \parentheses*{1-x-\alpha}=-\parentheses*{1-x}\ln\alpha-\parentheses*{1-\alpha-x}+\\
    &\parentheses*{1+\ln\alpha}\parentheses*{1-x-\alpha}=-\alpha\ln\alpha.
\end{align*}
Furthermore, every fixed $y \in\brackets*{0,1-\alpha}$ yields expected (over $x$) payoff of:
\begin{align*}
    &\int_{0}^{1-\alpha} f_X\parentheses*{x} g\parentheses*{x,y} dx + \Pr\brackets*{x=0} g\parentheses*{0,y} =
    \int_{0}^{1-\alpha} \frac{\alpha}{\parentheses*{1-x}^2} \cdot \parentheses*{1-x} dx -\\
    &\int_{0}^{y} \frac{\alpha}{\parentheses*{1-x}^2} \cdot \parentheses*{1-y} dx +\alpha \cdot y= -\alpha\ln\alpha-\parentheses*{1-y}\cdot\parentheses*{\frac{\alpha}{1-y}-\alpha}+\alpha \cdot y=-\alpha\ln\alpha,
\end{align*}
as desired.

Assume now that $\alpha<\frac{1}{e}$. We shall show that for the strategy couple $\parentheses*{o^*_x , o^*_y}$, both players are indifferent between the actions in $\brackets*{0,1-\frac{1}{e}}$ -- resulting in payoff of $\frac{1}{e}$ -- and worse off by taking other actions. Indeed, every fixed $x\in\brackets*{0,1-\frac{1}{e}}$ yields expected payoff of:
\begin{align*}
    &\int_{0}^{1-1/e} f_Y\parentheses*{y} g\parentheses*{x,y} dy = \int_{0}^{1-1/e} \frac{1}{1-y} \cdot \parentheses*{1-x} dy - \int_{x}^{1-1/e} \frac{1}{1-y} \cdot \parentheses*{1-y} dy =\\
    &\parentheses*{1-x}\cdot 1-\parentheses*{1-\frac{1}{e}-x}=\frac{1}{e},
\end{align*}
while for a fixed $x\in\parhalf*{1-\frac{1}{e},1-\alpha}$, the expected payoff is:
\begin{align*}
&\int_{1-1/e}^{1-\alpha} f_{Y}{dy}\parentheses*{y} g\parentheses*{x,y} d y= \int_{1-1/e}^{1-\alpha} \frac{1}{1-y} \cdot \parentheses*{1-x} dy =-\parentheses*{1+\ln\alpha}\parentheses*{1-x}<\\
&-\parentheses*{1+\ln\alpha}\parentheses*{1-\parentheses*{1-\frac{1}{e}}}<\frac{1}{e},
\end{align*}
as desired. Moreover, every fixed $y \in\brackets*{0,1-\frac{1}{e}}$ yields expected payoff of:
\begin{align*}
    &\int_{0}^{1-1/e} f_{X}\parentheses*{x} g\parentheses*{x,y} dx + \Pr\brackets*{x=0} g\parentheses*{0,y} =\int_{0}^{1-1/e} \frac{1}{e\parentheses*{1-x}^2} \cdot \parentheses*{1-x} dx -\\
    &\int_{0}^{y} \frac{1}{e\parentheses*{1-x}^2} \cdot \parentheses*{1-y} dx +\frac{1}{e} \cdot y=\frac{1}{e}\cdot 1-\frac{1-y}{e}\cdot\parentheses*{\frac{1}{1-y}-1}+\frac{1}{e} \cdot y=\frac{1}{e},
\end{align*}
while for a fixed $y\in\parhalf*{1-\frac{1}{e},1-\alpha}$ we have:
\begin{align*}
&\int_{0}^{1-1/e} f_{X}\parentheses*{x} g\parentheses*{x,y} dx + \Pr\brackets*{x=0} g\parentheses*{0,y} =\int_{0}^{1-1/e} \frac{1}{e\parentheses*{1-x}^2}\cdot\parentheses*{\parentheses*{1-x}-\parentheses*{1-y}} dx +\\
&\frac{1}{e} \cdot y=\frac{1}{e}\cdot\parentheses*{1-\parentheses*{1-y}\cdot\parentheses*{e-1}}+\frac{1}{e} \cdot y=\frac{2}{e}-1+y >\frac{1}{e},
\end{align*}
as desired.
\end{proof}
\begin{proof}[Proof of Theorem~\ref{thm:mon}]
Consider the two-player zero-sum game $G_{\mu_n}$ as described in Lemma~\ref{lem:threshold-game} and let $v$ be its value. We shall prove that $G$ (the zero-sum game interpretation of regret minimization; see Subsection~\ref{sub:adversarial}) also has value equal to $v$.

First, let us show that Sender can ensure regret of at most $v$ by using the signaling scheme $\pi_y$ defined as follows: Sender picks a random $y\sim o^*_y $ (where $ o^*_y $ is defined in Lemma~\ref{lem:threshold-game}); then she uses the $y$-threshold scheme (see Definition~\ref{def:threshold}).

Indeed, fix $u_r$. By Fact~\ref{fact:optimal-threshold}, it defines an optimal threshold $x=x\parentheses*{u_r}$ (in the standard persuasion model with the knowledge of Receiver's utility). Note that not all the values $x\in \brackets*{0,1}$ might be the optimal threshold for some $u_r$, but only $x\in \brackets*{0,1-\mu_n}$. Indeed, the highest-utility state $n$ should be included entirely in the knapsack, as otherwise adoption never occurs and the regret is $0$. Assume, therefore, that $x\in \brackets*{0,1-\mu_n}$.

We have $u^*=1-x$, and for every $y\in \brackets*{0,1-\mu_n}$: $u\parentheses*{\pi_y,u_r}=\textbf{1}_{y\geq x}$, as Receiver adopts if and only if $y\geq x$. Therefore, the regret for $\pi_y$ and $u_r$ is (where $g$ is as defined in Lemma~\ref{lem:threshold-game}):
\begin{align*}
    u^*\parentheses*{u_r}-u\parentheses*{\pi_y,u_r}=\mathbb{E}_{y\sim o^*_y }\brackets*{\parentheses*{1-x}-\parentheses*{1-y}\mathbf{1}_{y\geq x}}=
    \mathbb{E}_{y\sim o^*_y }\brackets*{g\parentheses*{x,y}}.
\end{align*}
By Lemma~\ref{lem:threshold-game}, the last expression is at most $v$. Therefore, Sender can ensure that the regret is at most $v$ for any fixed $u_r$; thus, it holds also for any mixture over $u_r$, as needed.

It remains to prove that Adversary can ensure a regret of at least $v$. Consider the following strategy of Adversary: she deterministically sets $u_r\parentheses*{i,1}:=-\mu_n$ for all $1\leq i\leq n-1$; then she chooses a random $t\sim o_x^*$ (with $o_x^*$ from Lemma~\ref{lem:threshold-game}) and sets $u_r\parentheses*{n,1}:=1-\mu_n-t$. For this Adversary's strategy, the optimal threshold $x$ is distributed according to $o_x^*$. To bound the regret that Sender can guarantee against this Adversary's strategy, fix a signaling scheme $\pi$. As $u_r\parentheses*{i,1}$ is the same for $1\leq i\leq n-1$, one can refer to the states $1,2,...,n-1$ as a single state that we call state $0$. Concretely, $\pi$ induces a signaling scheme $\pi'$ in a persuasion scenario with a binary-state space $\braces*{0,n}$ s.t.~$\mu_0:=1-\mu_n$; a posterior $p\parentheses*{s}\in \Delta\parentheses*{\brackets*{n}}$ is mapped to the posterior $\parentheses*{\sum_{1\leq i\leq n-1} p_i,p_n}$. Therefore, computing $\reg_{\mon}\parentheses*{\pi}$ in our original setting is reduced to computing $\reg_{\mon}\parentheses*{\pi}$ in this binary-state setting, with Adversary's strategy being setting $u_r\parentheses*{0,1}:=-\mu_n$ deterministically and choosing $u_r\parentheses*{n,1}=1-\mu_n-t$, where $t\sim o_x^*$.

From now on, we shall refer to a posterior $p\in \Delta\parentheses*{\braces*{0,n}}$ as a real number $q\in \brackets*{0,1}$, where $q:=p_0$. To understand the optimal Sender's utility in this binary-state persuasion problem, we compute her expected utility $u'\parentheses*{q}$ (when the expectation is over Adversary's mixed strategy) for each possible posterior $q\in \brackets*{0,1}$; then we evaluate the concavification of $u'$ at the prior $1-\mu_n$.

We shall prove that there exists Sender's best-reply signaling scheme that is a threshold signaling scheme. It would complete the proof, as the optimal threshold $x$ is distributed according to $o_x^*$, which gives for a Sender's $y$-threshold scheme expected regret of:
\begin{align*}
    \mathbb{E}_{x\sim o^*_x }\brackets*{\parentheses*{1-x}-\parentheses*{1-y}\mathbf{1}_{y\geq x}}=
    \mathbb{E}_{x\sim o^*_x}\brackets*{g\parentheses*{x,y}},
\end{align*}
(with $g$ defined in Lemma~\ref{lem:threshold-game}) which is, by Lemma~\ref{lem:threshold-game}, at least $v$ for every $y\in\brackets*{0,1-\mu_n}$.\footnote{We can assume w.l.o.g.~that $y\in\brackets*{0,1-\mu_n}$, as the optimal threshold $x$ is at most $1-\mu_n$; thus, choosing $y>1-\mu_n$ gives a greater regret than choosing $y=1-\mu_n$.}

Indeed, to understand Sender's best-reply  we shall consider a standard Bayesian persuasion instance in which Sender's utility is the expected (over Adversary's strategy) adoption probability. Sender's expected utility, as a function of Receiver's posterior $q$, is: $$u'\parentheses*{q}:=\Pr_{t\sim o^*_x }\brackets*{-\mu_n q+\parentheses*{1-\mu_n-t}\parentheses*{1-q}\geq 0}=\Pr_{t\sim o^*_x }\brackets*{t\leq 1-\frac{\mu_n}{1-q}}.$$

Straightforward calculations show that for $\mu_n\geq\frac{1}{e}$:
\begin{align*}
  u'\parentheses*{q}=\begin{cases}
  1-q &\text{ if } 0\leq q\leq\mu_0\\
  0 &\text{ if } \mu_0< q\leq 1,
  \end{cases}
 \end{align*}
while for $\mu_n<\frac{1}{e}$:
\begin{align*}
  u'\parentheses*{q}=\begin{cases}
  1 &\text{ if } 0\leq q<1-\mu_n e \\
  \frac{1-q}{\mu_n e} &\text{ if } 1-\mu_n e\leq q\leq\mu_0\\
  0 &\text{ if } \mu_0< q\leq 1.
  \end{cases}
 \end{align*}
\begin{figure}[h]
\caption{The function $u'\parentheses*{q}$ and its concavification. The function appears in blue; its concavification appears in red.}
    \label{fig:cav}
    \centering
    \begin{tikzpicture}[scale=3]
    \draw[->] (-0.05,0) -- (1.05,0);
    \draw[->] (0,-0.05) -- (0,1.05);
    \draw[blue] (0,1)--(0.5,0.5);
    \draw[blue] (0.5,0.01)--(1,0.01);
    \draw[red] (1.01,0)--(0.01,1);
    \draw (1,-0.02)--(1,0.02);
    \draw (0.5,-0.02)--(0.5,0.02);
    \node[below] at (0.5,-0.02) {$\mu_0$};
    \node[below] at (1,-0.02) {1};
    \node[right] at (1.05,0) {$q$};
    \node[above] at (0,1.05) {$u'$};
    \node[below] at (0.02,0) {0};
    \node[left] at (0,1) {$1$};
    \node[below] at (0.5,-0.2) {The case of $\mu_n\geq \frac{1}{e}$.};
    \end{tikzpicture}
    \begin{tikzpicture}[scale=3]
    \draw[->] (-0.05,0) -- (1.05,0);
    \draw[->] (0,-0.05) -- (0,1.05);
    \draw[blue] (0,1)--(0.46,1);
    \draw[blue] (0.46,1)--(0.8,0.37);

    \draw[blue] (0.8,0.01)--(1,0.01);
    
    \draw[red] (1,0.01)--(0.46,1.01);
    \draw[red] (0,1.01)--(0.46,1.01);
    \draw (1,-0.02)--(1,0.02);
    \draw (0.8,-0.02)--(0.8,0.02);
    
    \node[below] at (0.8,-0.02) {$\mu_0$};
    
    \node[below] at (1,-0.02) {1};
    
    \node[right] at (1.05,0) {$q$};
    
    \node[above] at (0,1.05) {$u'$};
    
    \node[below] at (0.02,0) {0};
    
    \node[left] at (0,1) {$1$};
    
    \node[below] at (0.5,-0.2) {The case of $\mu_n< \frac{1}{e}$.};
    
    \end{tikzpicture}
\end{figure}
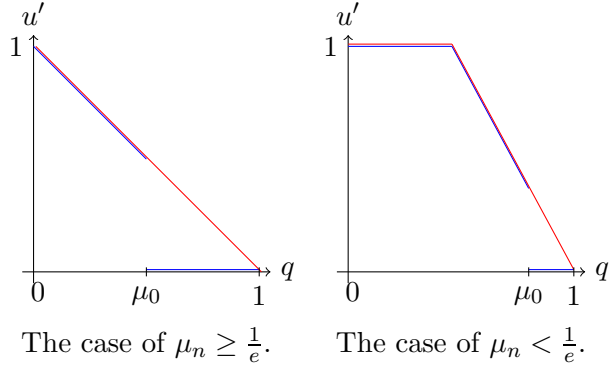
In both cases, $u'\parentheses*{q}=0$ for $q>\mu_0$ and $u'\parentheses*{q}>0$ otherwise. Therefore, the graph of the concavification of $u'$ (see Figure~\ref{fig:cav}) includes a line segment connecting the point $\parentheses*{q=1,u'\parentheses*{q}=0}$ with a point having a value of $q$ smaller than $\mu_0$. Thus, there exists an optimal Sender's signaling scheme with binary signals s.t.~one of them leads to the posterior $q=1$ (i.e., certainty that $\omega=0$); this scheme is a threshold scheme (see Definition~\ref{def:threshold}), as desired.
\end{proof}

\section{Proof of Proposition \ref{pro:pos-md}}
\label{ap:mul}
\begin{proof}[Proof of Proposition~\ref{pro:pos-md}]
To describe the proof, we use the \emph{fractional multidimensional knapsack approach}. Consider the $k$-dimensional cube $P:=\times_{j\in \brackets*{k}}\brackets*{1,n_j}$ with integer points representing the elements of $\Omega$. A \emph{knapsack} $K$ is a closed subset of $P$ s.t. for every $p\in K$, all the points that \emph{Pareto-dominate} $p$ are also in $K$.\footnote{A point $q=\parentheses*{q^1,...,q^k}$ \emph{Pareto-dominates} $p=\parentheses*{p^1,...,p^k}$ if $q^j\geq p^j$ for every $1\leq j\leq k$.} We define a \emph{knapsack signaling scheme} -- the multidimensional variant of a threshold scheme (see Definition~\ref{def:threshold}) -- as follows. Interpret $\Omega$, equipped with the prior $\mu$, as drawing uniformly $p\in P$ -- called the \emph{continuous state} -- s.t.~all the realizations in $\times_{j\in \brackets*{k}}\parhalf*{\sum_{l_j<m_j}\Pr_{\omega_j '\sim \mu_j}\brackets*{\omega_j '=l_j},\sum_{l_j\leq m_j}\Pr_{\omega_j '\sim \mu_j}\brackets*{\omega_j '=l_j}}$ correspond to the state $\parentheses*{m_1,...,m_k}$ (when $1\leq m_j\leq n_j$ for every $1\leq j\leq k$). For a knapsack $K$, the \emph{$K$-knapsack signaling scheme} is a binary-signal scheme revealing whether $p\in K$ or not.\footnote{The knapsack scheme specified by a knapsack with zero volume is the no-information signaling scheme.}

Consider the signaling scheme $\pi$ specified by the knapsack\\$K_{\pi}:=\braces*{\parentheses*{p^1,...,p^k}\in P: p^j\geq\frac{n_j+1}{2} \forall 1\leq j\leq k}$. We claim that $\pi$ ensures a regret of at most $1-2^{-k}$.

Indeed, if Adversary chooses $A$ s.t. every continuous state realization in $K_{\pi}$ leads to adoption -- Sender earns exactly $2^{-k}$. Since always $u^{*}\parentheses*{u_r}\leq 1$, the regret is at most $1-2^{-k}$, as desired.

Otherwise, there exists some continuous state realization $p=\parentheses*{p^1,...,p^k}\in K$ leading to rejection. As $\parentheses*{p^1,...,p^k}\in K$, we have $p^j\geq\frac{n_j+1}{2}$ for every $1\leq j\leq k$. Thus, no realization with the $j$-th coordinate being at most $\frac{n_j+1}{2}$ for every $j$ leads to adoption. Similarly to the single-dimensional setting, for a given $u_r$, there exists an optimal knapsack signaling scheme. Had Sender known $A$, her optimal knapsack strategy would have been specified by a knapsack $K^*$ disjoint to $S:=\braces*{\parentheses*{q^1,...,q^k}\in P: q^j\leq\frac{n_j+1}{2}\forall 1\leq j\leq k}$. Such a strategy yields adoption probability of at most $1-\frac{\vol\parentheses*{S}}{\vol\parentheses*{P}}=1-2^{-k}$. Therefore, the regret is at most $u^*\leq 1-2^{-k}$, as desired.
\end{proof}
\section{Proof of Theorem~\ref{thm:apr}}
\label{ap:apr}
We start with the following lemma.
\begin{lemma}
\label{lem:threshold-game-apr}
Let $\alpha\in \parentheses*{0,1}$ be a constant. Denote $\beta=\beta\parentheses*{\alpha}:=\frac{1}{1+\ln\frac{1}{\alpha}}$. Let $G_\alpha '$ be a two-player zero-sum game with continuum action sets $X=Y=\brackets*{0,1-\alpha}$ of the $x$- (minimizing) and the $y$- (maximizing) players, respectively, and utility $$h\parentheses*{x,y}:=\frac{\parentheses*{1-y} \mathbf{1}_{y\geq x}}{{1-x}}.$$ Then the value of $G_\alpha '$ is $\beta$. Furthermore, there exists an optimal strategy ${o^*_x}'  \in \Delta\parentheses*{\brackets*{0,1-\alpha}}$ of the $x$-player that has an atom of weight $\beta$ on $x=0$ and otherwise has the density function $f_X\parentheses*{x}:=\frac{\beta}{1-x}$ over the entire segment $\brackets*{0,1-\alpha}$. Moreover, there exists an optimal strategy ${o^*_y}'  \in \Delta\parentheses*{\brackets*{0,1-\alpha}}$ of the $y$-player that has an atom of weight $\beta$ on $y=1-\alpha$ and otherwise has the density function $f_Y\parentheses*{y}:=\frac{\beta}{1-y}$ over the entire segment $\brackets*{0,1-\alpha}$.
\end{lemma}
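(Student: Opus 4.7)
The plan is to verify directly that the pair $({o^*_x}', {o^*_y}')$ described in the statement forms a saddle-point equilibrium with value $\beta$; this is the standard route for continuous-action zero-sum games where candidate mixed strategies are already in hand (exactly as in the proof of Lemma~\ref{lem:threshold-game}). Once both indifference conditions are verified, the minimax arguments automatically give $\val(G'_\alpha) = \beta$ together with the claimed optimal strategies.

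First I would check that ${o^*_x}'$ and ${o^*_y}'$ are legitimate probability distributions. For ${o^*_x}'$ the total mass is $\beta + \int_0^{1-\alpha} \frac{\beta}{1-x}\,dx = \beta - \beta \ln \alpha = \beta(1 + \ln(1/\alpha)) = 1$, and the same computation works for ${o^*_y}'$, using the definition $\beta = 1/(1 + \ln(1/\alpha))$.

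The next step, which is the main content, is to fix an arbitrary action of one player and show that the expected payoff against the opponent's candidate strategy equals $\beta$. Fixing $y \in [0, 1-\alpha]$ and integrating against ${o^*_x}'$, the atom at $x=0$ contributes $\beta(1-y)$ (since $h(0,y) = 1-y$), while the density contribution is
\begin{equation*}
\int_0^{y} \frac{\beta}{1-x}\cdot\frac{1-y}{1-x}\,dx = \beta(1-y)\left[\frac{1}{1-x}\right]_0^y = \beta(1-y)\left(\frac{1}{1-y} - 1\right) = \beta - \beta(1-y),
\end{equation*}
so the sum is exactly $\beta$. Symmetrically, fixing $x \in [0,1-\alpha]$ and integrating against ${o^*_y}'$, the atom at $y=1-\alpha$ contributes $\beta\alpha/(1-x)$ while the density contributes $\int_x^{1-\alpha} \frac{\beta}{1-y}\cdot\frac{1-y}{1-x}\,dy = \beta(1-\alpha-x)/(1-x)$, and the sum again collapses to $\beta/(1-x)\cdot(1-x) = \beta$.

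Finally, from these two indifference identities one concludes: the $y$-player's strategy ${o^*_y}'$ guarantees an expected payoff of at least $\beta$ against every pure $x$, and the $x$-player's strategy ${o^*_x}'$ caps the expected payoff at $\beta$ against every pure $y$. Hence $\val(G'_\alpha)$ exists and equals $\beta$, with the claimed strategies being optimal. The only subtle points in the calculation are keeping the indicator $\mathbf{1}_{y\geq x}$ correctly reflected in the limits of integration and verifying the atoms are placed at the correct endpoints; once these bookkeeping items are right, the computation is entirely routine and no additional arguments are needed.
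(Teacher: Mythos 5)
Your proposal is correct and takes essentially the same approach as the paper: verifying the saddle-point by direct computation of the two indifference conditions. The only small addition is your explicit check that the candidate strategies integrate to total mass $1$, which the paper leaves implicit; the substantive calculations are identical.
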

\begin{proof}[Proof of Lemma~\ref{lem:threshold-game-apr}]
We shall show that for the strategy couple $\parentheses*{{o^*_x}', {o^*_y}'}$, both players are indifferent between all actions in $\brackets*{0,1-\alpha}$ and the payoff is $\beta$; therefore, $\parentheses*{{o^*_x}' , {o^*_y}'}$ is an equilibrium and the game value is $\beta$. Indeed, every fixed $x\in\brackets*{0,1-\alpha}$ yields expected (over $y$) payoff of:
\begin{align*}
    &\int_{0}^{1-\alpha} f_Y\parentheses*{y} h\parentheses*{x,y} dy + \Pr\brackets*{y=1-\alpha} h\parentheses*{x,1-\alpha} = \int_{x}^{1-\alpha} \frac{\beta}{1-y}\cdot\frac{1-y}{1-x} dy +\beta\cdot \frac{1-\parentheses*{1-\alpha}}{1-x}=\\
    &\frac{\beta\parentheses*{1-\alpha-x}}{1-x}+\frac{\alpha\beta}{1-x}=\beta,
\end{align*}
as needed. Furthermore, every fixed $y \in\brackets*{0,1-\alpha}$ yields expected (over $x$) payoff of:
\begin{align*}
    &\int_{0}^{1-\alpha} f_X\parentheses*{x} h\parentheses*{x,y} dx + \Pr\brackets*{x=0} h\parentheses*{0,y} =
    \int_{0}^{y} \frac{\beta}{1-x} \cdot \frac{1-y}{1-x} dx +\beta\cdot\frac{1-y}{1-0}=\\
    &\beta\parentheses*{1-y}\parentheses*{\frac{1}{1-y}-1}+\beta\parentheses*{1-y}=\beta,
\end{align*}
as desired.
\end{proof}
\begin{proof}[Proof of Theorem~\ref{thm:apr}]
Using the notations of Lemma~\ref{lem:threshold-game-apr}, consider the two-player zero-sum game $G_{\mu_n}'$; by the lemma, it has value of $\beta:=\beta\parentheses*{\mu_n}=\frac{1}{1+\ln\frac{1}{\mu_n}}$. We claim that the two-player zero-sum game interpretation of adversarial approximation maximization, in which Sender is the maximizing player and her possible strategies are the signaling schemes, and Adversary is the minimizing player and her possible strategies are the mixtures over Receiver's utility functions, has value of $\beta$.

Using similar arguments to Theorem~\ref{thm:mon} proof, Sender can ensure adversarial approximation of at least $\beta$ by using the signaling scheme $\pi_y$ defined as follows: Sender picks a random $y\sim {o^*_y}' $; then she uses the $y$-threshold scheme. The only difference from Theorem~\ref{thm:mon} proof is that in the current proof, for a fixed $u_r$ and the corresponding optimal threshold $x$, the expected adversarial approximation over $\pi_y$ is $\mathbb{E}_{y\sim o^*_y }\brackets*{h\parentheses*{x,y}}$ (for $h$ from Lemma~\ref{lem:threshold-game-apr}).

It remains to prove that Adversary can ensure an adversarial approximation of at most $\beta$. As in Theorem~\ref{thm:mon} proof, it is enough to prove that Adversary can ensure an adversarial approximation of at most $\beta$ in a persuasion scenario with a binary-state space $\braces*{0,n}$ s.t.~$\mu_0:=1-\mu_n$ by using the following strategy: $u_r\parentheses*{0,1}:=-\mu_n$ deterministically; $u_r\parentheses*{n,1}:=1-\mu_n-t$, where $t\sim o_x^*$.

We shall refer to a posterior $p\in \Delta\parentheses*{\braces*{0,n}}$ as a real number $q\in \brackets*{0,1}$, where $q:=p_0$. As in Theorem~\ref{thm:mon} proof, we compute Sender's expected utility $u'\parentheses*{q}$ (the expectation is over Adversary's mixed strategy) for each possible posterior $q\in \brackets*{0,1}$, and then we evaluate the concavification of $u'$ at the prior $1-\mu_n$. Again, it is enough to prove that there exists Sender's best-reply signaling scheme that is a threshold scheme.

Indeed, to understand Sender's best-reply we can consider a standard Bayesian persuasion instance in which Sender's utility is the expected -- over Adversary's strategy -- ratio of the indicator whether adoption occurs at the posterior $q$ to the optimal adoption probability upon knowing $t$. Sender's expected utility, as a function of the posterior $q$, is: $$u'\parentheses*{q}:=\mathbb{E}_{t\sim o^*_x }\brackets*{\frac{\mathbbm{1}_{-\mu_n q+\parentheses*{1-\mu_n-t}\parentheses*{1-q}\geq 0}}{1-t}}=\mathbb{E}_{t\sim o^*_x }\brackets*{\frac{\mathbbm{1}_{t\leq 1-\frac{\mu_n}{1-q}}}{1-t}}.$$

Straightforward calculations show that (see Figure~\ref{fig:cav2}):
\begin{align*}
  u'\parentheses*{q}=\begin{cases}
  \frac{\beta\parentheses*{1-q}}{\mu_n} &\text{ if } 0\leq q\leq\mu_0\\
  0 &\text{ if } \mu_0< q\leq 1.
  \end{cases}
 \end{align*}
Therefore, there exists an optimal signaling scheme at the prior that uses the posterior $q=1$, which is, in particular, a threshold scheme, as needed.
\begin{figure}[H]
\caption{The function $u'$ (appears in blue) and its concavification (appears in red).}
    \label{fig:cav2}
    \centering
    \begin{tikzpicture}[scale=3]
    \draw[->] (-0.05,0) -- (1.05,0);
    \draw[->] (0,-0.05) -- (0,1.25);
    \draw[blue] (0,1.2)--(0.5,0.6);

    \draw[blue] (0.5,0.01)--(1,0.01);
    
    \draw[red] (1.01,0)--(0.01,1.2);
    \draw (1,-0.02)--(1,0.02);
    \draw (0.5,-0.02)--(0.5,0.02);
    
    \node[below] at (0.5,-0.02) {$\mu_0$};
    
    \node[below] at (1,-0.02) {1};
    
    \node[right] at (1.05,0) {$q$};
    
    \node[above] at (0,1.25) {$u'$};
    
    \node[below] at (0.02,0) {0};
    
    \node[left] at (0,1.2) {$\frac{\beta}{\mu_n}$};

    \end{tikzpicture}
\end{figure}
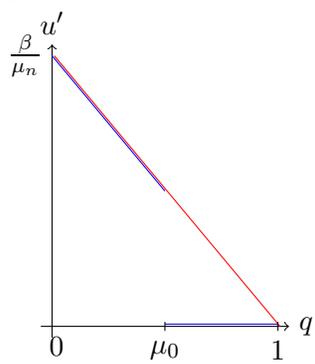
\end{proof}
\end{document}